\newcommand{\aref}[1]{\hyperref[#1]{Appendix~\ref{#1}}}
\let\originalparagraph\paragraph
\renewcommand{\paragraph}[2][.]{\originalparagraph{#2#1}}
\newcommand{\SK}{\mathcal{S}}
\newcommand{\SKd}{\mathcal{S}_d}
\newcommand{\WF}{\mathcal{W}}
\newcommand{\schoenT}{\mathcal{T}}
\definecolor{darkgreen}{rgb}{0,0.5,0}
\definecolor{darkpink}{rgb}{0.91, 0.33, 0.5}
\begin{document}

\mainmatter

\title{Representing Directed Trees as Straight Skeletons}
\titlerunning{Directed Trees as Straight Skeletons}

\author{Oswin Aichholzer\inst{1}
  \and
  Therese Biedl\inst{2}
  \and
  Thomas Hackl\inst{1}
  \and
  Martin Held\inst{3}
  \and
  Stefan Huber\inst{4}
  \and
  Peter Palfrader\inst{3}
  \and
  Birgit Vogtenhuber\inst{1}
  \thanks{%
    OA and BV supported by Austrian Science Fund (FWF) I 648-N18;
    TB by NSERC;
    TH by FWF P23629-N18;
    MH and PP by FWF P25816-N15.}%
}

\authorrunning{Oswin Aichholzer,
  Therese Biedl,
  Thomas Hackl,
  Martin Held,
  Stefan Huber,
  Peter Palfrader, and
  Birgit Vogtenhuber}

\institute{
    Technische Universität Graz,
    Institut für Softwaretechnologie,
    8010 Graz, Austria;
    \texttt{\{oaich,thackl,bvogt\}@ist.tugraz.at}
  \and
    David R.~Cheriton School of Computer Science,
    University of Waterloo, Waterloo, Ontario N2L 1A2, Canada;
    \texttt{biedl@uwaterloo.ca}
  \and
    Universität Salzburg,
    FB Computerwissenschaften,
    5020 Salzburg, Austria;
    \texttt{\{held,palfrader\}@cosy.sbg.ac.at}
  \and
    Institute of Science and Technology Austria,
    3400 Klosterneuburg, Austria;
    \texttt{stefan.huber@ist.ac.at}.
}

\maketitle

\begin{abstract}
The straight skeleton of a polygon is the geometric graph obtained by
tracing the vertices during a mitered offsetting process. It is known
that the straight skeleton of a simple polygon is a tree, and one can
naturally derive directions on the edges of the tree from the
propagation of the shrinking process.

In this paper, we ask the reverse question:  Given a tree with directed
edges, can it be the straight skeleton of a polygon?  And if so, can we
find a suitable simple polygon?  We answer these questions for all
directed trees where the order of edges around each node is fixed.
\end{abstract}


\section{Introduction}

Many geometric structures on sets of points, line segments, or polygons,
e.g.  Delaunay triangulations, Voronoi diagrams, straight skeletons, and
rectangle-of-influence graphs can be represented as graphs.  The
\textit{graph representation} problem (for each of these geometric
structures) asks which graphs can be represented in this way. That is,
given a graph $G$, can we find a suitable input set $S$ of points,
segments, or polygons such that the geometric structure induced by $S$
is equivalent to $G$?

Graph representation has been studied for numerous geometric structures
in the past.  To name just a few examples:  Every planar graph is the
intersection graph of line segments~\cite{CG09}, every wheel is a
rectangle-of-influence graph~\cite{LLMW98}, and all 4-connected planar
graphs are Delaunay triangulations~\cite{DS96}.  See
also~\cite{DBLL-GD94} for many results on proximity drawability of
graphs.

Of particular interest to our paper are two results.  First, Liotta and
Meijer~\cite{LM03} studied when a tree can be represented as the Voronoi
diagram of a set of points, and showed that this is always possible (and
the points are in convex position).  Secondly, Aichholzer et
al.~\cite{ACD+12} studied when a tree can be represented as the straight
skeleton of a polygon, and showed that this is always possible (and the
polygon is convex).

\subsection{Background} The \emph{straight skeleton} $\SK(P)$ of a
simple polygon $P$ is defined via a wavefront-propagation process:  Each
edge of $P$ emits a wavefront edge moving in a self-parallel manner at
unit speed towards the interior of the polygon.

Initially, at time $t=0$, this wavefront is a single polygon that is
identical to~$P$.
As the propagation process continues, however, the wavefront will change
due to self-interaction:
(i) In \emph{edge events}, an edge of the wavefront shrinks to zero
length and is removed from the wavefront.
(ii) In \emph{split events}, a vertex of the wavefront meets the
interior of a previously non-incident wavefront edge.  This split
partitions the edge and the polygon into two parts that now propagate
independently.
(iii) If the input is not in general position, more complex interactions
are possible.  For example, entire portions of the wavefront collapse at
once when parallel wavefronts that were emanated by parallel polygon
edges meet, or new reflex wavefront vertices are created when multiple
reflex vertices interact in a \emph{vertex-event}.
The propagation process ends once all components of the wavefront have
collapsed.
Therefore, the set of wavefront edges at any time $t$ form one or more
polygons, which we call the wavefront and denote by $\WF(t)$.

\begin{wrapfigure}{R}{0.47\textwidth}
  \centering
  \includegraphics[page=1]{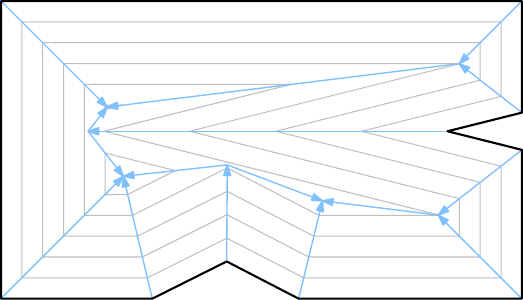}
  \caption{
    The straight skeleton $\SK(P)$ of an input polygon $P$ (bold)
    is the union of the traces of wavefront vertices.  Wavefront
    polygons at different times are shown in gray.
  }
  \label{fig:simple-sk}
\end{wrapfigure}

The straight skeleton $\SK(P)$, introduced by Aichholzer et
al.~\cite{Aic&95}, is then defined as the geometric graph whose edges
are the traces of the vertices of $\WF(t)$ over time; see
\autoref{fig:simple-sk}.  For simple polygons, $\SK(P)$ always is a
tree~\cite{Aic&95}, with the leaves corresponding to vertices of $P$ and
interior vertices having degree 3 or more.  Several algorithms are known
to construct the straight skeleton \cite{AiAu98,EpEr99,HuHe12a}.

We can distinguish between convex and reflex vertices of $P$ or
$\WF(t)$.  A vertex~$v$ is \emph{reflex} (\emph{convex}) if the interior
angle at $v$ is greater (less) than $\pi$.
We call an arc of $\SK(P)$ \emph{reflex} (\emph{convex}) if it was
traced out by a reflex (convex) vertex of the wavefront.  When
discussing the wavefront propagation process, we will often
interchangeably refer to wavefront vertices and straight skeleton arcs.

The roof model~\cite{Aic&95} represents a convenient means to study the
wavefront over the entire propagation period.  It embeds the wavefront
in three-space, where the $z$-axis represents time: $\schoenT(P) :=
\bigcup_{t \ge 0} ( \WF(t) \times \{t\})$.  The inner edges and vertices
of this polytope correspond to arcs and nodes of the straight skeleton
$\SK(P)$, and the $z$-coordinate of each element corresponds to the time
it was traced out by the wavefront-propagation process.
Reflex arcs correspond to valleys and convex arcs to ridges.

If we exclude polygons where parallel polygon edges cause entire
wavefront segments to collapse at one time, resulting in horizontal roof
edges, then arcs of $\SK(P)$ will have been traced out by the wavefront
during its propagation process, and we can assign a natural direction to
these arcs: make them point into their trace direction.  This assignment
gives rise to the \emph{directed straight skeleton}, $\SKd(P)$.

\smallskip

A \emph{directed tree} $T$ is a directed graph whose underlying
undirected graph is a tree, i.e., connected and acyclic.
A \emph{labeled tree} $T_\ell$ is a tree with assignments of labels to
its arcs.
For most of this paper, trees are \emph{ordered}, i.e., for every node
there is a fixed circular order in which the arcs appear around this
node.

It is customary to refer to the edges and vertices of the straight
skeleton as \emph{arcs} and \emph{nodes}, and to reserve \emph{edges}
and \emph{vertices} for elements of input or wavefront polygons.
We also use \emph{arcs} and \emph{nodes} to refer to elements of trees.

\subsection{Our results}

Our paper was inspired by the work in~\cite{ACD+12}, which studies
undirected trees.  However, the structure of the straight skeleton
imposes directions on the arcs, except in degenerate cases.  Hence, the
natural question to ask is:

\begin{problem}[Directed straight-skeleton realizability]
  Given a directed tree $T$,
  (i) is there a polygon $P$ such that $\SKd(P)$ shares the structure of
  $T$ (we denote this by $\SKd(P) \sim T$), and
  (ii) if yes, can we reconstruct such a polygon $P$ from $T$?
\end{problem}

Having directions assigned to the arcs makes the straight-skeleton
realizability problem significantly harder: For example, one easily sees
that in a convex polygon the straight skeleton is a rooted tree (with
exactly one sink), and so not all directed trees can be represented via
convex polygons. Hence, the results from~\cite{ACD+12} do not transfer
to directed trees.

The directed-straight-skeleton-realizability question can be asked for
multiple meanings of ``directed tree'':
It could be a {\em geometric tree} (nodes are given with coordinates),
an {\em ordered tree} (the clockwise order of arcs around each node is
specified) or an {\em unordered tree} (we have the nodes and arcs but
nothing else).
For a geometric tree, the problem is trivial, since the locations of the
leaves specify the vertices of the only polygon for which this could be
the straight skeleton.  (If  leaves are not specified as points  but
only as ``being on a ray'', then the geometric setting is non-trivial,
but can be solved in polynomial time~\cite{BieHH13}.)

In this paper, we consider the variant of the problem for ordered trees.
In the case of polygons in general position, we give three obviously
necessary conditions and show that these are also always sufficient.  It
turns out that the order of arcs around nodes is not important, so the
algorithm also works for unordered directed trees.
We then turn to polygons without restrictions on vertex-positions.  In
this case the directed straight skeletons can be significantly more
complicated, and in particular, have arbitrary degrees.  Testing whether
a directed tree can be represented as straight skeleton requires deeper
insight into the structure of straight skeletons, and we can exploit
these to develop such a testing algorithm and, in case of a positive
answer, find a suitable polygon.


\section{Trees from Polygons in General Position}

In a first step we restrict the problem to polygons in general position.
By general position we mean that no four edges have supporting lines
which are tangent to a common circle.  In particular this means that
during the wavefront propagation process only standard edge and split
events are observed, resulting in straight skeletons where all interior
nodes are of degree exactly three.

Investigating the structure of such directed straight skeletons enables
us to establish a number of necessary conditions for a directed tree to
be a directed straight skeleton of a polygon in general position.

\paragraph{Necessary conditions}

Let $P$ be a polygon in general position and let $T$ be the directed
tree such that $\SKd(P) \sim T$.

The leaves of $T$ correspond to the vertices of $P$.  In the roof model,
these vertices have zero $z$-coordinate, while all other nodes have
positive $z$-coordinates since they will have been created by an event
at some time $t > 0$.  Thus, any arc incident to a vertex $v$ of $P$
increases in elevation as it moves away from $v$.  As such, all leaves
of $T$ must have in-degree 0 and out-degree 1.

The interior nodes all have degree 3 and are classified by their
in- and out-degrees as follows:
\begin{description}
  \item[in-degree 3:] \emph{(peak nodes)}
    A collapse of a wavefront component (of triangular shape) at the end
    of its propagation process is witnessed by a local maximum in the
    roof.  These local maxima correspond to nodes with in-degree three.
  \item[in-degree 2:] \emph{(collapse nodes)}
    Edge events in the propagation process, i.e., collapsed wavefront
    edges, result in a node with two incoming arcs and one outgoing arc.
  \item[in-degree 1:] \emph{(split nodes)}
    Split events will cause a node that has only one incoming arc and
    two outgoing arcs.
  \item[in-degree 0:]
    Since the roof model will have no local minima except at the edges
    of $P$~\cite{Aic&95}, nodes with in-degree zero and out-degree three
    cannot exist.
\end{description}

Of these, the case of a split event requires some more attention since
it imposes additional restrictions on the incoming arc.
Recall that we can distinguish between reflex and convex vertices of
$P$, and note that any vertex is either reflex or convex by the general
position assumption.  For a split event to occur, a reflex vertex of the
wavefront must crash into a previously non-incident part of the
wavefront.

In the absence of vertex events, which create skeleton-nodes of degree
at least four and therefore do not happen when the polygon is in general
position, no reflex vertex can ever be created by an event.  Thus, any
reflex vertex that is part of an event must have been emanating from a
reflex vertex of the input polygon itself.
Accordingly, the incoming arc in a split event node must have a leaf at
its other end.

We summarize these conditions in the following lemma.

\begin{lemma}
  \label{lem:necessary}
  Let $P$ be a simple polygon in general position and let $T$ be the directed
  tree such that $\SKd(P) \sim T$.  Then in $T$
  \begin{description}
    \item[(G1)] the incident arc of each leaf is outgoing,
    \item[(G2)] every interior node has degree 3 and at most two
                outgoing arcs,
    \item[(G3)] if an interior node has out-degree two, then the
                incoming arc connects directly from a leaf. 
  \end{description}
\end{lemma}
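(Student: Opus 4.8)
The plan is to verify each of the three conditions directly from the wavefront-propagation process, using the roof model $\schoenT(P)$ and treating the elevation (the $z$-coordinate, which equals the event time) as the governing quantity. The guiding principle is that every arc is directed along its trace direction, and a trace strictly increases in elevation as the propagation advances; so an arc's direction always points from its lower endpoint to its higher endpoint.

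For (G1) I would argue via elevations. Every leaf of $T$ corresponds to a vertex of $P$, which sits at $z=0$ in $\schoenT(P)$, whereas every interior node is created by an event at some time $t>0$ and hence has strictly positive elevation. Since the single arc incident to a leaf connects it to a strictly higher interior node, that arc must leave the leaf. Thus each leaf has out-degree one and in-degree zero.

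For (G2) the degree claim is immediate from the general-position hypothesis: only edge and split events occur, no vertex events arise, and so every interior node has degree exactly three. It then remains to bound the out-degree by two, equivalently to exclude in-degree zero. An interior node of in-degree zero would have all three incident arcs leaving it and climbing in elevation, i.e.\ it would be an interior local minimum of the roof. Here I would invoke the structural fact from~\cite{Aic&95} that the roof has no local minimum except along the edges of $P$; this rules such a node out. Hence every interior node has in-degree at least one and out-degree at most two.

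For (G3) consider an interior node of out-degree two; by (G2) it has in-degree one, so it is a split node, produced by a split event. A split event requires a \emph{reflex} wavefront vertex to crash into a non-incident portion of the wavefront, so its single incoming arc is the trace of that reflex vertex. The crucial step — and the one I expect to be the main obstacle — is to show that in general position no reflex vertex is ever created during propagation: new reflex vertices appear only at vertex events, which produce nodes of degree at least four and are excluded by general position. Consequently the reflex vertex driving the split must already be present in $P$, so the incoming arc emanates from a reflex vertex of the input polygon, whose other endpoint is a vertex of $P$ and therefore a leaf of $T$. This establishes (G3) and completes the proof.
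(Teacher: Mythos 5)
Your proof is correct and follows essentially the same route as the paper: the zero-versus-positive elevation argument in the roof model for (G1), the absence of interior local minima of the roof (citing~\cite{Aic&95}) to exclude in-degree zero for (G2), and the observation that vertex events --- the only source of new reflex wavefront vertices --- are ruled out by general position, so the reflex vertex driving a split must originate at a vertex of $P$, giving (G3). No gaps; this matches the paper's argument step for step.
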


\paragraph{Observations}  Let $T$ be a directed tree that satisfies
conditions (G1--G3).  Classify the interior nodes as split nodes,
collapse nodes and peaks as above.  The goal is to show that any such
tree can indeed be realized as a straight skeleton.
For this, we split the tree into multiple subtrees in a particular way
(also illustrated in the example in \autoref{fig:merge-polys}). We have
the following observations. Full proofs of the next four lemmas can be
found in
\aref{sec:proofs}.

\begin{restatable}{lemma}{LEMonepeak}
  \label{lem:one-peak}
  Let $T$ be a directed tree that satisfies conditions (G1--G3).
  If $T$ has no split nodes, then $T$ has exactly one peak.
\end{restatable}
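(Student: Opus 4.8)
The plan is to prove the statement by a single global counting argument on arc directions, rather than by any local structural case analysis. First I would record exactly which interior-node types can occur once split nodes are forbidden. By (G2) every interior node has degree $3$ and out-degree at most $2$, and since in-degree plus out-degree equals $3$, the out-degree alone determines the type: out-degree $0$ means a peak (in-degree $3$), out-degree $1$ means a collapse node (in-degree $2$), and out-degree $2$ means a split node (in-degree $1$). As $T$ has no split nodes by hypothesis, every interior node is therefore either a peak or a collapse node, and in particular has out-degree $0$ or $1$.

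Next I would count the arcs of $T$ in two ways. Write $L$ for the number of leaves, $c$ for the number of collapse nodes, and $p$ for the number of peaks, so $T$ has $n = L + c + p$ nodes and hence $n-1$ arcs, the underlying graph being a tree. Each arc has a unique tail, so the number of arcs equals the sum of out-degrees over all nodes. By (G1) each leaf has out-degree $1$, each collapse node has out-degree $1$, and each peak has out-degree $0$, so the total out-degree is $L + c$. Equating the two counts gives
\[
  L + c = n - 1 = (L + c + p) - 1,
\]
which immediately yields $p = 1$, as claimed.

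I do not expect any genuine obstacle here; the only points needing care are bookkeeping ones. One should confirm that the out-degree characterization of interior nodes is exhaustive (this follows directly from (G2) together with the in-degree conventions fixed for the three node types) and that degenerate inputs cause no trouble—for instance, a hypothetical single-arc tree violates (G1), since the head of the arc would be a leaf of out-degree $0$, so every tree obeying (G1--G3) has at least one interior node and the formula above genuinely applies. As an independent sanity check on the existence half of the statement, one can follow out-arcs from an arbitrary node: each non-peak node has out-degree exactly $1$, so the walk is forced, and since a tree is acyclic it must terminate at an out-degree-$0$ node, i.e.\ a peak; the counting argument then strengthens this existence into the desired uniqueness.
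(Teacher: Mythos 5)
Your proof is correct and is essentially the same as the paper's: the paper also counts the $k-1$ arcs against out-degrees, noting that with no split nodes every node has out-degree $0$ or $1$, so exactly one node (the peak) has out-degree $0$. Your version merely spells out the bookkeeping (the $L+c+p$ notation, the degenerate single-arc case, and the walk-to-a-peak sanity check) that the paper leaves implicit.
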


\begin{restatable}{lemma}{LEMsplitTree}
  \label{lem:splitTree}
  Let $T$ be a tree that satisfies conditions (G1--G3).
  Create a forest $F$ as follows:  At any split node $s$ of $T$, remove
  $s$, remove the leaf incident to the incoming arc of $s$, and replace
  the two outgoing arcs of $s$  by two new leaves that are connected to
  the other ends of these arcs.
  Then each component of $F$ satisfies conditions (G1--G3) and has
  exactly one peak.
\end{restatable}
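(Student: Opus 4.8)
The plan is to reduce the claim to Lemma~\ref{lem:one-peak}: I will show that every component of $F$ is a directed tree satisfying (G1--G3) and containing no split node, whence the ``exactly one peak'' conclusion is immediate from Lemma~\ref{lem:one-peak}. The enabling observation is that the surgery at a split node $s$ is purely local and preserves the in- and out-degrees of every surviving node. Write $\ell$ for the leaf at the tail of the incoming arc of $s$ (it exists and is a leaf by (G3)), and let $u,w$ be the heads of the two outgoing arcs of $s$. I first note that neither $u$ nor $w$ is a leaf or a split node: the arc $s\to u$ is incoming at $u$, so $u$ is not a leaf by (G1); and if $u$ were a split node its unique incoming arc would be $s\to u$, which does not originate at a leaf, contradicting (G3). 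The same holds for $w$. Hence $u,w$ are peaks or collapse nodes, and replacing $s\to u$ by $\ell_1\to u$ and $s\to w$ by $\ell_2\to w$ leaves the incoming arcs at $u,w$ intact while deleting only $s$ and $\ell$.

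Next I would argue that $F$ is indeed a forest. Removing vertices from a tree can only split it into more pieces, and attaching the pendant leaves $\ell_1,\ell_2$ to the surviving nodes $u,w$ cannot merge distinct pieces; so each component is connected and acyclic. The surgeries at different split nodes do not interfere: a leaf has out-degree one, so it is the incoming source of at most one split node, and the deleted leaves are therefore pairwise distinct.

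I would then verify (G1--G3) for an arbitrary component. The interior nodes of $F$ are exactly the peaks and collapse nodes of $T$, since all split nodes are removed and no new interior node is introduced; by the locality observation their degrees are unchanged, so each still has degree three and out-degree at most one, giving (G2). Every leaf of $F$ is either a new leaf $\ell_i$, whose single arc is outgoing by construction, or a surviving original leaf, whose incident arc was and remains outgoing; this gives (G1). Finally, in $T$ the only nodes of out-degree two are the split nodes, all of which have been removed, so no node of $F$ has out-degree two and (G3) holds vacuously.

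With (G1--G3) established and no split node present, Lemma~\ref{lem:one-peak} applies to each component and yields exactly one peak, which finishes the argument. The only delicate point is the degree bookkeeping at the children $u,w$ of a split node; the observation that they are always peaks or collapse nodes is what makes the pendant-leaf replacement degree-preserving and keeps (G1--G3) intact, and I expect this to be the main (if modest) obstacle.
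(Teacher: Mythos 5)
Your proof is correct and takes essentially the same route as the paper's: the surgery at each split node preserves the in- and out-degrees of all surviving nodes, so (G1--G3) carry over, no split nodes remain, and Lemma~\ref{lem:one-peak} yields the unique peak per component. Your extra observations---that (G3) forbids two adjacent split nodes (so the surgeries cannot interfere) and that distinct split nodes delete distinct leaves---merely spell out non-interference details the paper's terser proof leaves implicit.
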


\paragraph{Sufficient conditions}

It remains to be shown that the necessary conditions (G1--G3) from
\autoref{lem:necessary} are also sufficient.  We show this by
constructing a simple polygon $P$ such that $\SKd(P) \sim T$, given a
directed tree $T$ that satisfies (G1--G3).  We start by showing this for
trees that have no split nodes.

\begin{wrapfigure}{R}{0.47\textwidth}
  \centering
  \includegraphics[page=1]{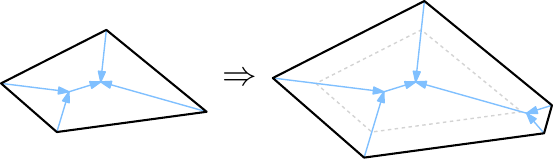}
  \caption{
    Creating the polygon for a tree with $k+1$ interior nodes
    from a polygon for a tree with one less.
  }
  \label{fig:creating-P}
\end{wrapfigure}

\begin{restatable}{lemma}{LEMsimpleExtend}
  \label{lem:simple-extend}
  For any directed tree $T$ that satisfies (G1--G3) and has no split
  nodes, there is a convex polygon $P$ such that $\SKd(P) \sim T$.
\end{restatable}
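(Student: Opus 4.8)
The plan is to prove Lemma~\ref{lem:simple-extend} by induction on the number of interior nodes of $T$, building a convex polygon whose wavefront propagation realizes the prescribed tree. By Lemma~\ref{lem:one-peak}, a tree with no split nodes has exactly one peak, which (together with (G1) and (G2)) means every interior node is either a collapse node or the unique peak, and $T$ is a rooted tree with the peak as its sink: all arcs point ``upward'' toward the peak in the roof model. This is exactly the combinatorial type one expects from a convex polygon, since a convex wavefront never undergoes split or vertex events and collapses to a single point; thus the target class of polygons is the right one.

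First I would establish the base case. The smallest such tree has a single interior node (the peak) of in-degree~3, adjacent to three leaves; this is realized by a triangle, whose three edges emit wavefronts that meet at the incenter, producing exactly one peak with three incoming convex arcs. For the inductive step, suppose every tree on $k$ interior nodes satisfying (G1--G3) with no split nodes is realizable by a convex polygon. Given a tree $T$ with $k+1$ interior nodes, I would locate a collapse node $u$ all of whose non-leaf neighbor is a single interior node (i.e.\ a collapse node incident to two leaves via its incoming arcs); such a ``lowest'' collapse node exists because the arcs are oriented toward the single peak, so following arcs against their direction from the peak must terminate at a collapse node whose two incoming arcs both come from leaves. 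Contracting $u$ together with its two incident leaves into a single new leaf yields a tree $T'$ with $k$ interior nodes still satisfying (G1--G3) and having no split nodes, to which the induction hypothesis applies, giving a convex polygon $P'$ with $\SKd(P') \sim T'$.

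The heart of the argument, and the step I expect to be the main obstacle, is the geometric surgery that turns $P'$ back into a polygon $P$ realizing $T$ (compare \autoref{fig:creating-P}). The new leaf of $T'$ corresponds to a single vertex $v$ of $P'$; in $P$ this vertex must be replaced by a short edge $e$ whose two endpoints are two new convex vertices, so that the wavefront emitted by $e$ shrinks to zero length at precisely the location of $u$, creating the desired collapse node, while the two new convex arcs traced by the endpoints of $e$ become the two incoming arcs of $u$. Concretely, I would take the arc of $\SK(P')$ incident to $v$, pick a point $p$ in its relative interior close to $v$, and ``cut the corner'' at $v$ by a line through $p$ parallel to\ldots in fact by replacing $v$ with an edge $e$ perpendicular to the bisector at $v$ and placed so that its wavefront collapses at $p$. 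The key invariants to verify are: (a) $P$ remains convex (so no new split or vertex events are introduced, and all arcs remain convex and upward-directed); (b) the portion of the skeleton away from $v$ is unchanged, so that all of $T'$ is faithfully reproduced; and (c) the local event at $p$ is an edge event producing a degree-3 collapse node with the correct incoming/outgoing incidences and orientations.

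The main difficulty is controlling the perturbation so that replacing one vertex by a short edge does not disturb the rest of the straight skeleton nor destroy convexity. I would handle this by choosing the cut sufficiently close to $v$: since the wavefront events realizing $T'$ occur at strictly positive times bounded away from~$0$ (finitely many events, each at a definite time), a sufficiently small edge~$e$ collapses at a time smaller than any other event, so its collapse node $u$ appears ``first'' and is combinatorially isolated from the remaining propagation, which proceeds exactly as in $P'$ from that point on. Convexity is preserved because cutting a corner of a convex polygon by a chord yields a convex polygon. Assembling these facts establishes $\SKd(P) \sim T$ and completes the induction.
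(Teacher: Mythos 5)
Your proposal is correct and follows essentially the same argument as the paper: induction on the number of interior nodes, a triangle as base case, locating a collapse node whose two incoming arcs come from leaves, contracting it, and then re-inserting it by replacing the corresponding vertex of the smaller polygon with a sufficiently short edge that collapses before all other events (cf.\ \autoref{fig:creating-P}). The only cosmetic difference is that the paper first takes a small exterior offset of $P'$ before inserting the short edge, whereas you truncate the corner of $P'$ directly; for convex polygons these are equivalent, since the supporting lines of all other edges, and hence the rest of the skeleton, are unchanged.
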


\begin{proof}
  We show this by constructive induction.  Any triangle is a polygon
  such that its straight skeleton shares the structure of the peak node
  of $T$.

  To construct a polygon $P$ for a tree $T$ with $k$ interior nodes, we
  first construct a polygon $P'$ for a tree $T'$ with $k-1$ interior
  nodes.  We obtain $T'$ by replacing a node of $T$ and its two adjacent
  leaves with a single leaf $\ell$.  (There always is such a node.)
  To obtain $P$, we compute an exterior offset of $P'$ and replace the
  vertex that corresponds to $\ell$ with a sufficiently small edge such
  that it collapses before the wavefront of $P$ reaches $P'$.

  This polygon $P$ will then satisfy $\SKd(P) \sim T$.
  See \autoref{fig:creating-P} for an illustration.
  \qed
\end{proof}

Furthermore, it is possible to add a constraint on one interior angle of
the polygon:

\begin{restatable}{lemma}{LEMsimpleExtendAngle}
  \label{lem:simple-extend-angle}
  Let $T$ be a directed tree without split nodes and let $\ell$ be a
  leaf of $T$.  Further, let $\alpha$ be an arbitrary angle with
  $0 < \alpha < \pi$.
  Then there exists a convex polygon $P$ such that $\SKd(P) \sim T$ and
  such that the interior angle at the vertex that corresponds to
  $\ell$ is $\alpha$.
\end{restatable}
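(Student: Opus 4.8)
The plan is to build directly on Lemma~\ref{lem:simple-extend}, whose inductive construction already produces a convex polygon $P$ realizing any split-free directed tree $T$; the only new ingredient is controlling the interior angle at the vertex corresponding to the chosen leaf $\ell$. First I would examine the base case of that induction: the peak node is realized by a triangle, and a triangle has three free angle parameters, so one can certainly prescribe the angle $\alpha$ at any one chosen vertex (subject only to the triangle-inequality-style constraint that the remaining two angles sum to $\pi-\alpha>0$, which holds since $0<\alpha<\pi$). This shows the claim holds whenever $\ell$ already exists as a leaf in the smallest tree.

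Next I would trace how the leaf $\ell$ is treated through the inductive steps. In the construction of Lemma~\ref{lem:simple-extend}, a leaf either survives untouched from $T'$ to $T$, or it is one of the two new leaves created when a node and its two adjacent leaves are collapsed to a single leaf $\ell'$ of $T'$. The key observation is that the vertex of $P$ corresponding to a surviving leaf is \emph{not} modified by the exterior-offset-plus-short-edge operation that produces $P$ from $P'$: an exterior offset moves edges outward in a self-parallel fashion and thus preserves the interior angle at every vertex that is not the one being replaced. Hence, if I arrange for $\ell$ to be present as a leaf at the earliest possible stage and prescribe its angle $\alpha$ there, that angle is preserved verbatim through every subsequent offset step, because none of those steps touches the vertex corresponding to $\ell$.

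The clean way to organize this is to redo the induction so that the vertex corresponding to $\ell$ is fixed at the angle $\alpha$ from the very beginning and is never selected as the node-to-be-collapsed in the reverse direction. Concretely, I would choose the collapsing order (the sequence of nodes removed when passing from $T$ down to the triangle) so that the node whose two adjacent leaves include $\ell$ is collapsed \emph{last}, i.e.\ the triangle in the base case has $\ell$ as one of its three vertices. I then set that triangle's angle at the $\ell$-vertex equal to $\alpha$. Every later exterior offset enlarges the polygon and attaches new vertices for the other interior nodes without ever altering the $\ell$-vertex, so the final polygon $P$ has interior angle exactly $\alpha$ at the vertex corresponding to $\ell$ while still satisfying $\SKd(P)\sim T$ by the same argument as in Lemma~\ref{lem:simple-extend}.

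The main obstacle I anticipate is verifying that we are always free to collapse the tree down to a triangle that retains $\ell$, i.e.\ that the ``there always is such a node'' claim in Lemma~\ref{lem:simple-extend} can be strengthened to ``there always is such a node whose collapse does not destroy $\ell$ until the final triangle.'' Since $T$ is split-free it has a single peak (Lemma~\ref{lem:one-peak}) and every interior node has out-degree one, so peeling off leaf-pairs is a straightforward bottom-up process on a rooted tree; the only care needed is to always peel a node whose adjacent leaves are \emph{not} $\ell$ until no other choice remains, which is possible precisely because $\ell$ is a single leaf and the tree has at least two leaves at every collapse step until the last triangle. I would also need to double-check that prescribing $\alpha$ in the base triangle does not conflict with the freedom required to make the short collapsing edges in Lemma~\ref{lem:simple-extend} work; but since those edges are introduced at vertices other than the $\ell$-vertex and can be taken arbitrarily small, no conflict arises.
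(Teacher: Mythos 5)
There is a genuine gap, and it sits exactly where you flagged your ``main obstacle'': it is in general \emph{impossible} to collapse the tree down to a triangle that retains $\ell$. In a split-free tree satisfying (G1--G3), every node has out-degree at most one, so following outgoing arcs from any node leads to the unique peak $p$ (\autoref{lem:one-peak}); hence all arcs point toward $p$, and the tree is effectively rooted at $p$ with every non-peak interior node having exactly two children. The collapse step of \autoref{lem:simple-extend} can never be applied to $p$ itself (that would create a leaf with an incoming arc, violating (G1)), so the base triangle is always $p$ with three leaf-children. Now if $\ell$ is at distance $m\ge 2$ from $p$, its parent $v_{m-1}$ is a non-peak interior node and must be collapsed at some step of the reduction --- and that collapse, by definition, \emph{removes} $\ell$ and turns $v_{m-1}$ into a leaf. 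Scheduling that collapse last does not help: the resulting triangle then contains $v_{m-1}$, not $\ell$. So your plan of fixing $\alpha$ in the base triangle and protecting it through all offsets only works in the special case where $\ell$ is adjacent to the peak.

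The consequence is that in the expansion direction the vertex corresponding to $\ell$ is only \emph{created} at the last expansion along the path $p=v_0,v_1,\dots,v_m=\ell$, and there its angle is not free: replacing a convex vertex of interior angle $\beta$ by a short edge produces two new vertices whose interior angles sum to $\pi+\beta$, each lying strictly in the interval $(\beta,\pi)$. So the angle at $\ell$ is forced to exceed the angle at the vertex for $v_{m-1}$, which in turn exceeds that for $v_{m-2}$, and so on back to the triangle. This monotonicity is the missing idea, and it is exactly what the paper's proof handles: it picks an increasing chain $0<\alpha_1<\alpha_2<\dots<\alpha_m=\alpha$ (possible since $\alpha>0$), sets the triangle's angle at the vertex for $v_1$ to $\alpha_1$, and maintains the invariant that whenever the vertex for $v_i$ is a leaf of the current polygon it has angle $\alpha_i$, so that each expansion along the path can legally choose $\alpha_{i+1}\in(\alpha_i,\pi)$. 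Your observation that exterior offsets preserve angles at untouched vertices is correct and is implicitly used by the paper as well, but it only protects the angle \emph{after} the $\ell$-vertex exists; it cannot substitute for threading the increasing angle sequence down the whole path from the peak to $\ell$.
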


Now we are ready to consider trees with split nodes.

\begin{lemma}
  Let $T$ be a directed tree that satisfies conditions (G1--G3).
  Then there exists a polygon $P$ such that $\SKd(P) \sim T$.
  \label{lem:constructing-P}
\end{lemma}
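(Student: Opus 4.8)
The plan is to reduce to the split-free case already settled in \autoref{lem:simple-extend}: cut $T$ at its split nodes, realize each resulting piece as a convex polygon, and then glue the pieces back together by reversing one split event at a time.

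By \autoref{lem:splitTree}, cutting $T$ at all of its split nodes yields a forest $F$ whose components each satisfy (G1--G3) and contain no split node (hence, by \autoref{lem:one-peak}, exactly one peak). Every split node $s$ of $T$ is, by (G3), fed by a reflex leaf $r_s$, and the cut turns its two outgoing arcs $a_1,a_2$ into two new leaves, one in each of the two components on the outgoing side of $s$. Regarding these components as the nodes of a tree whose edges are the split nodes, we obtain the \emph{component tree}, which we root arbitrarily. I realize every component as a convex polygon via \autoref{lem:simple-extend}; moreover, using \autoref{lem:simple-extend-angle}, I prescribe a sharp interior angle at the distinguished leaf of each non-root component that stems from the split node joining it to its parent in the component tree. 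Since each non-root component is joined to its parent by a unique split node, at most one prescribed angle is charged to every component, so the single angle that \autoref{lem:simple-extend-angle} lets us control always suffices.

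It remains to reverse the split events, which is the crux. Consider a split node $s$, let $Q_1,Q_2$ be the polygons assembled so far for the two sides of $s$, and let $v_1\in Q_1$, $v_2\in Q_2$ be the convex vertices realizing its two new leaves. At each $v_i$, one incident edge lies on the line of the wavefront edge that was split (the ``$e$-side'') and the other is an edge incident to the reflex vertex $r_s$ (the ``$f$-side''). Since the straight skeleton and its induced directions are invariant under similarities, I may rotate, translate, and uniformly scale $Q_1$ and $Q_2$ freely. I place them so that their two $e$-sides come to lie on a common line with the polygons on the same side, and then delete $v_1,v_2$, routing the boundary straight through along that line on the one side and through a single new vertex $r_s$ joining the two $f$-sides on the other. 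The prescribed sharp angle at one of $v_1,v_2$, together with the fact that the other interior angle is $<\pi$, guarantees that the two $f$-sides meet in a reflex angle, so $r_s$ is a reflex vertex of the resulting simple polygon. Finally I scale the newly created pocket at $r_s$ to be small relative to $Q_1$ and $Q_2$; then, during the propagation, $r_s$ crashes into the reconstructed edge before any event internal to $Q_1$ or $Q_2$ can occur. This first event is exactly the split $s$: its unique incoming arc is the reflex arc of $r_s$ and its two outgoing arcs are $a_1,a_2$. Immediately afterwards the wavefront separates into two components, each an inward offset of $Q_1$, respectively $Q_2$, taken before any of their own events, so each component traces out a directed skeleton with the structure of the corresponding subtree. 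See \autoref{fig:merge-polys}.

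Processing the split nodes from the leaves of the component tree toward its root and applying this merge at each of them assembles all components into a single simple polygon $P$. A uniform rescaling before each merge keeps every pocket firing first and keeps the two sides of every split non-interacting, so all previously reconstructed events survive and occur in the required causal order; by induction on the number of split nodes we conclude $\SKd(P)\sim T$. The main obstacle is precisely this merge step: one must certify that the new reflex vertex splits into the intended edge, that this split is the earliest event of the combined process, and that the two emerging wavefront components evolve independently so as to realize their respective subtrees. The angular freedom of \autoref{lem:simple-extend-angle}, distributed over the component tree by the charging argument, together with scaling is what makes all three requirements simultaneously achievable.
\qed
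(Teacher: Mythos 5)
Your proposal is essentially the paper's own proof: the same decomposition at split nodes via \autoref{lem:splitTree}, the same convex realization of each split-free component with one controlled angle via \autoref{lem:simple-extend} and \autoref{lem:simple-extend-angle}, and the same merge that reverses one split event at a time with rescaling to keep the pieces from interacting --- your direct surgery (delete $v_1,v_2$, merge the collinear $e$-sides into one edge, join the $f$-sides at a new reflex vertex $r_s$) is exactly the paper's placement of the two polygons in the split-event configuration followed by a small exterior offset, just described at the level of the input polygon. One quantitative fix: ``a sharp angle at one of $v_1,v_2$ plus an angle $<\pi$ at the other'' does not by itself make the $f$-sides meet in a reflex angle (take $\alpha_1=170^\circ$, $\alpha_2=45^\circ$); you need $\alpha_1+\alpha_2<\pi$, so, as the paper does, choose the child's prescribed angle to be $<\pi-\alpha_1$ \emph{after} the parent's angle $\alpha_1$ at $v_1$ is known --- your charging argument supports exactly this, provided each component is constructed after its parent in the component tree.
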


\begin{proof}
  As in \autoref{lem:splitTree}, we split $T$ at split nodes, also
  dropping the incoming reflex arc and its incident leaf.  We obtain a
  forest $F = \{T_1, T_2, \ldots, T_n\}$ where each $T_i$ is a tree
  without split nodes.

  This forest can in turn be considered an undirected tree, where each $T_i$
  gives rise to a node and nodes are connected if and only if the
  corresponding trees originally had a split node in common.  We pick an
  arbitrary root for $F$, say $T_1$, and construct a convex polygon
  $P_1$ such that $\SKd(P_1) \sim T_1$.

  This root $T_1$ is connected to one or more children in $F$ via split
  nodes.  Let $T_2$ be such a child, and let $\ell_1\in T_1$ and
  $\ell_2\in T_2$ be the two leaves obtained when splitting at the split
  node common to $T_1$ and $T_2$.  Let $v_1$ be the vertex in $P_1$
  corresponding to $\ell_1$, and assume it has angle $\alpha_1<\pi$.

  We construct a convex polygon $P_2$ such that $\SKd(P_2) \sim T_2$ and
  the vertex $v_2$ corresponding to $\ell_2$ has angle
  $\alpha<\pi-\alpha_1$.  This enables us to merge $P_1$ and $P_2$ in
  the following way:  We place $P_2$ in the plane such that $v_1$ of
  $P_1$ and $v_2$ of $P_2$ occupy the same locus.  We rotate $P_2$ such
  that the angle between a pair of edges of $P_1$ and $P_2$ is exactly
  $\pi$.  Which pair of edges is chosen depends on where in the cyclic
  order the incident reflex leaf at the split node in $T$ lies.  The
  layout of $P_1$ and $P_2$ then corresponds to the layout of the
  wavefront at the split-event time.  If we now compute a small outer
  offset and designate this to be $P$, then the directed straight
  skeleton of $P$ has the same structure as the subtree of $T$ that is
  made up by $T_1$, $T_2$, the split node, and its incident leaf; see
  \autoref{fig:merge-polys}.

  \begin{figure}[!ht]
    \captionsetup{aboveskip=0mm,belowskip=0ex}
    \centering
    \begin{subfigure}[b]{0.49\columnwidth}
      \centering
      \includegraphics[page=1]{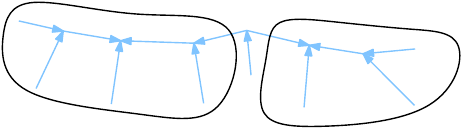}
      \caption{}
    \end{subfigure}
    \begin{subfigure}[b]{0.49\columnwidth}
      \centering
      \includegraphics[page=2]{merge}
      \caption{}
    \end{subfigure}
    \caption{
      (a) Given a directed tree we split it into a forest $F$ of
      subtrees without split nodes.  (b) Recursing on the structure of
      $F$, we can create convex polygons for each element (dotted) and
      then merge them into ever larger polygons.
    }
    \label{fig:merge-polys}
  \end{figure}

  We then repeat this process for another child of $T_1$ or $T_2$.  Note
  that it may be necessary to scale the polygon that we add to a
  sufficiently small size so that it does not conflict with other parts
  of the polygon already constructed.  This is always possible since for
  each vertex of a polygon there exists a disk that intersects the
  polygon only in the wedge defined by the vertex.

  Once all elements of the forest have been processed, we have
  constructed a polygon $P$ whose straight skeleton has the same
  structure as $T$, i.e., $\SKd(P) \sim T$.
  \qed
\end{proof}

Notice that conditions (G1--G3) do not depend on the order of arcs
around nodes; we can construct a polygon for any such order.  So in
particular if $T$ is an unordered tree that satisfies (G1--G3), then we
can pick an arbitrary order and the lemma holds.  Hence, we get the
following theorem:

\begin{theorem}
  An (ordered or unordered) directed tree $T$ is the directed straight
  skeleton of a simple polygon $P$ in general position if and only if
  $T$ satisfies conditions (G1--G3).
\end{theorem}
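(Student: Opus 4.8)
The plan is to obtain the theorem by combining the two directions that the preceding lemmas have already established, and then to argue separately that the circular order of arcs plays no role.

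First I would handle necessity, the ``only if'' direction. If $\SKd(P) \sim T$ for a simple polygon $P$ in general position, then $T$ satisfies (G1--G3) immediately by \autoref{lem:necessary}: that lemma was proved by classifying the interior nodes according to the event (edge, split, or peak) that created them, using that general position forces all interior nodes to have degree three and that a reflex wavefront vertex entering a split event must descend from a reflex vertex of $P$.

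Next I would handle sufficiency, the ``if'' direction. Given any directed tree $T$ satisfying (G1--G3), \autoref{lem:constructing-P} produces a polygon $P$ with $\SKd(P) \sim T$: the tree is cut at its split nodes into a forest of split-free subtrees (\autoref{lem:splitTree}), each realized as a convex polygon by \autoref{lem:simple-extend} and \autoref{lem:simple-extend-angle}, and these polygons are then glued together along the split nodes using the prescribed angle bounds. This already settles both implications for ordered trees.

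It remains to extend the statement to unordered trees, which is where the final observation comes in. Conditions (G1--G3) are purely about in- and out-degrees and about which arcs connect directly to leaves; none of them mentions the circular order around a node. Moreover, the construction of \autoref{lem:constructing-P} succeeds for any prescribed circular order. Therefore an unordered tree satisfying (G1--G3) can be realized by fixing an arbitrary order and invoking the ordered construction, while conversely any realizable tree satisfies (G1--G3) no matter its order. The genuinely hard part of the whole argument is the sufficiency construction --- in particular, merging two polygons at a shared split node so that the incident reflex leaf lands in the correct position of the cyclic order --- but for the theorem itself this has already been discharged by the cited lemmas, so the only remaining effort is the assembly together with the order-independence remark.
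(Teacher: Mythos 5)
Your proposal matches the paper's argument exactly: the theorem is obtained by citing \autoref{lem:necessary} for necessity and \autoref{lem:constructing-P} for sufficiency, followed by the same observation that (G1--G3) are order-independent and the construction works for any prescribed circular order, so an arbitrary order can be fixed for unordered trees. This is precisely the assembly the paper performs in the remark immediately preceding the theorem, so nothing is missing.
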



\section{Realizing trees with labeled arcs}

Recall that an arc of the straight skeleton is called \emph{reflex}
(\emph{convex}) if it was traced out by a reflex (convex) vertex of the
wavefront.  One can easily see that the construction in
\autoref{lem:constructing-P} creates a polygon where all arcs of the
straight skeleton are convex, with the exception of arcs from leaves to
split nodes.

For later constructions (for trees with higher degrees), it will be
important that we test not only whether an ordered tree can be realized,
but additionally we want to impose onto each arc whether it is reflex or
convex in the straight skeleton.  We study this question here first for
trees with maximum degree 3.

So assume we have a directed tree $T$ that satisfies (G1--G3).
Additionally we now label each arc of $T$ with either ``reflex'' or
``convex'', and we ask whether there exists a polygon $P$ that realizes
this labeled directed tree in the sense that $\SKd(P)\sim T$ and the
type of skeleton-arc in $\SKd(P)$ matches the label of the arc in~$T$.
We denote this by $\SKd(P)\sim_\ell T$.

We observe that a peak node is created when a wavefront of three edges,
a triangle, collapses.  Therefore, all incident arcs at a peak node are
convex.

For collapse nodes, we know that the outgoing arc is convex.  (Recall
that reflex arcs in a straight skeleton are only created in vertex
events, which cannot exist when all interior nodes have degree three.)
At least one of the incoming arcs needs to be convex, as two reflex
wavefront vertices meeting in an event will result in a node of degree
at least four.

We have already established that the incoming arc at a split node needs
to be reflex.  Furthermore, it is easy to see that the two outgoing arcs
of a split node need to be convex.

We summarize the necessary conditions for a labeled directed tree to
correspond to a straight skeleton in the following lemma:

\begin{lemma}
  Let $P$ be a simple polygon  in general position and let $T_\ell$ be
  the labeled directed tree such that $\SKd(P) \sim_\ell T_\ell$.  Then
  \begin{description}
    \item[(L1)] for peak nodes, all incoming arcs are convex;
    \item[(L2)] for collapse nodes, at least one incoming arc and the outgoing arc are convex;
    \item[(L3)] for split nodes, the incoming arc is reflex and both outgoing arcs are convex.
  \end{description}
  \label{lem:labeled-necessary}
\end{lemma}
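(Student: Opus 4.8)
The plan is to establish each of the three conditions by identifying the wavefront event that creates the node in question and invoking the central structural fact used throughout this section: since a polygon in general position admits no vertex events, no reflex wavefront vertex is ever created during propagation. Consequently every reflex arc of $\SKd(P)$ must be traced by a reflex vertex of the input polygon $P$, i.e., must emanate from a leaf of $T$. Each part of the lemma then reduces to determining the type of the relevant wavefront vertices, so the proof chiefly formalizes the observations already recorded before the statement.

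For (L1), I would use that a peak node records the collapse of a triangular wavefront component. A triangle is convex, so all three of its vertices are convex; the three incoming arcs at the peak are exactly the traces of these vertices and are therefore convex.

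For (L2), the single outgoing arc of a collapse node is traced by the wavefront vertex created by the edge event. By the central fact this new vertex cannot be reflex, so the outgoing arc is convex. For the incoming arcs I would argue by contradiction: if both were reflex, then two reflex wavefront vertices would meet at this node, which forces a node of degree at least four (a vertex event) and contradicts the general-position assumption that every interior node has degree exactly three. Hence at least one incoming arc is convex.

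For (L3), the incoming arc is reflex because a split event is triggered by a reflex wavefront vertex striking a non-incident edge, and by the central fact this vertex must originate at a reflex vertex of $P$. The two outgoing arcs are traced by the two vertices created when the struck edge is partitioned; each is bounded by an edge incident to the reflex vertex together with a piece of the split edge, and each is therefore convex. I expect no substantial obstacle, since the statement follows directly once the event type is known; the only step requiring care is the degree count in (L2), where I must confirm that two reflex wavefront vertices coinciding necessarily raises the node degree above three.
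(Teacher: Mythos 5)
Your proposal is correct and follows essentially the same route as the paper, which proves this lemma via the observations stated immediately before it: peaks arise from collapsing triangles (L1), the absence of vertex events in general position rules out newly created reflex vertices so the outgoing arc of a collapse node is convex and two reflex vertices meeting would force degree at least four (L2), and the split event is triggered by a reflex vertex emanating from a reflex vertex of $P$ while the two new vertices are convex (L3). If anything, you are slightly more careful than the paper, which merely asserts the degree-four claim in (L2) that you flag as the one step needing verification.
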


We will now show that (L1--L3) are also sufficient:
\begin{lemma}
  Any labeled directed tree $T_\ell$ that meets conditions (G1--G3) from
  \autoref{lem:necessary} and (L1--L3) from
  \autoref{lem:labeled-necessary} is realizable by a simple polygon $P$.
\end{lemma}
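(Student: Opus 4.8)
The plan is to reduce everything to a single new geometric primitive and then reuse the machinery of \autoref{lem:constructing-P}. First I would record a purely structural consequence of the local conditions: \emph{every reflex arc of $T_\ell$ runs from a leaf to either a split node or a collapse node}. Indeed, if a reflex arc is incoming at a node $q$, then $q$ is not a leaf (by (G1)) and not a peak (by (L1)), so $q$ is a split or a collapse node; and if the same arc is outgoing at a node $p$, then $p$ is not a collapse or split node, since their outgoing arcs are convex by (L2) and (L3), so $p$ must be a leaf. Now, the construction in \autoref{lem:constructing-P} already yields a polygon in which every skeleton arc is convex \emph{except} the arcs running from leaves to split nodes, and by (L3) those are exactly the arcs that are required to be reflex there. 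Hence the only labels not yet realized are reflex arcs running from a leaf to a collapse node, and the whole task reduces to realizing such arcs.

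Next I would upgrade the base construction to the labeled setting. Splitting $T_\ell$ at its split nodes as in \autoref{lem:splitTree} produces a forest of split-free labeled subtrees, and in each such subtree the surviving reflex arcs all run from a leaf to a collapse node (the reflex leaves feeding split nodes are removed by the split operation, while the leaves feeding collapse nodes survive). I would then prove, by the same constructive induction used in \autoref{lem:simple-extend}, that each split-free labeled subtree is realizable by an almost-convex polygon: when the interior node re-introduced in the induction step is a collapse node one of whose incoming arcs is labeled reflex, I replace the corresponding vertex not by a small convex edge but by a small reflex \emph{dent} — a reflex wavefront vertex together with a short incident edge separating it from a convex neighbor. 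As the wavefront propagates inward this short edge collapses in an edge event, so the reflex vertex traces a short reflex arc and its neighbor a convex arc, the two meeting at the collapse node exactly as required; the merged wavefront vertex then continues precisely as the original convex vertex would have, leaving the rest of the skeleton unchanged. Choosing the dent small enough guarantees that its edge is the first in its neighborhood to collapse, so the reflex vertex performs an edge event rather than a split event, and the polygon stays simple.

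Finally I would re-run the merge of \autoref{lem:constructing-P} essentially verbatim, using these labeled subtree-polygons (and the angle control of \autoref{lem:simple-extend-angle}) in place of the purely convex ones. The merges at split nodes are unaffected, because each dent is an arbitrarily small, purely local feature that can be placed well inside the wedge of its leaf; it therefore interferes neither with the vertex/edge alignment used when gluing two subtree-polygons nor with the outer offset taken afterwards. The resulting simple polygon $P$ satisfies $\SKd(P)\sim_\ell T_\ell$.

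The main obstacle I anticipate is the local analysis of the reflex dent. One must verify that, for a suitable choice of its angle and size, the short edge is the first edge in its neighborhood to collapse, that its collapse point and the direction of the surviving convex arc coincide with those of the un-dented vertex, and that the two arcs emanating from the dent carry exactly the labels (one reflex, one convex) demanded by (L2). Establishing these facts — essentially a continuity/infinitesimal argument showing that a sufficiently shallow and short dent perturbs only the single arc it is meant to affect and collapses in a clean edge event — is the technical heart of the proof; everything else is bookkeeping inherited from \autoref{lem:splitTree} and \autoref{lem:constructing-P}.
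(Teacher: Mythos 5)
Your proposal is correct, and its geometric core is identical to the paper's: the small inserted edge whose endpoints form a reflex/convex pair collapsing in an immediate edge event is exactly the paper's z-shape operation (\autoref{fig:z-shape}), and your opening structural observation---reflex arcs run only from leaves to split or collapse nodes---is the first sentence of the paper's proof. Where you genuinely differ is the scheduling. The paper first contracts every reflex-fed collapse node into a leaf (dropping its two incoming arcs and their leaves) to obtain a tree $T'_\ell$ whose only reflex arcs feed split nodes, realizes $T'_\ell$ wholesale by the unlabeled machinery of \autoref{lem:constructing-P}, and only then takes a single global outer offset and performs all z-shape insertions at the resulting convex vertices, with edges small enough that these edge events precede all others. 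You instead weave the dent insertion into the inductive construction of \autoref{lem:simple-extend} on each split-free subtree, before merging at split nodes. The paper's ordering buys simplicity: every intermediate polygon in the induction and in the merge stays convex, so \autoref{lem:simple-extend}, \autoref{lem:simple-extend-angle}, and \autoref{lem:constructing-P} apply verbatim, and the dents are a purely local postprocess. Your ordering obliges you to check (as you rightly flag) that the induction, angle control, and merge tolerate non-convex intermediate polygons---note for instance that at a dent step the controlled angle along the path of \autoref{lem:simple-extend-angle} \emph{decreases} (by the excess of the reflex angle over $\pi$) rather than increases, which is harmless only because that excess can be made arbitrarily small. In exchange, your version is arguably more robust: since your convex dent-vertex $v_2$ can be expanded again in later induction steps, you transparently handle a collapse node whose convex incoming arc comes from another interior node (a chain of events along one wavefront edge), whereas the paper's one-shot recipe, which tacitly treats both in-neighbors of a reflex-fed collapse node as leaves, would have to be iterated to cover that case.
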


\begin{wrapfigure}{R}{0.47\textwidth}
  \centering
  \includegraphics[page=1]{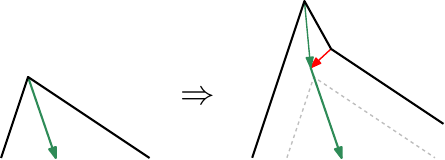}
  \caption{
    Extending a convex vertex of $P$ such that a leaf in its tree is
    replaced by a collapse node where one incoming arc is reflex and one
    is convex.
  }
  \label{fig:z-shape}
\end{wrapfigure}

\begin{proof}
  Since reflex arcs in $T_\ell$ only originate at leaves and terminate
  at collapse or split nodes (interior nodes never have outgoing reflex
  arcs, and peak nodes have no incoming reflex arcs), we can create a
  tree $T'_\ell$ by replacing each collapse node that has an incoming
  reflex arc with a leaf and dropping the two incident incoming arcs and
  their leaves.  This resulting tree $T'_\ell$ will have no reflex arcs
  except for those leading from a leaf to split nodes by (L1), (L3), and
  (G3). Thus, we can create a polygon $P'$ such that $\SKd(P') \sim_\ell
  T'_\ell$ by the process described in the proof of
  \autoref{lem:constructing-P}, respecting all labels.

  We now obtain $P$ by offsetting $P'$ slightly to the outside.  Then,
  we modify $P$ at each vertex $v$ that corresponds to a leaf in
  $T'_\ell$ that was the result of replacing a collapse node of
  $T_\ell$.  Note that each such vertex $v$ is convex since the outgoing
  arc of a collapse node is convex by (L2).  We insert a small edge in
  place of $v$, replacing it with $v_1$, the edge, and $v_2$.  We choose
  the angles at $v_1$ and $v_2$ such that one of them is reflex and the
  other is convex, in order to match the labeling of $T_\ell$.
  \autoref{fig:z-shape} illustrates this operation.

  By making the new edge sufficiently small, we can ensure that these
  events happen before the wavefront becomes identical to $P'$, and thus
  before all remaining events of the wavefront propagation.
  \qed
\end{proof}


\section{Arbitrary Node Degrees}

Once we allow for straight skeletons where interior nodes can have
degrees larger than three, a number of previous constraints no longer
hold.  Most importantly, during the wavefront propagation vertex events
can happen, resulting in new reflex vertices in the wavefront after the
event.  Consequently, for instance, split nodes no longer need to be
adjacent to leaves.  Larger node degrees also result in more complex
variants of split, collapse, and peak nodes.
Note that we continue to restrict polygons from having parallel edges as
those might cause skeleton arcs which have no direction (when they get
created as a result of two wavefront edges crashing into each other) or
straight skeleton arcs that are neither reflex nor convex (when two
parallel wavefronts moving in the same direction become incident at an
event).

\smallskip

In order to understand what combinations of reflex and convex incoming
and outgoing arcs may exist at a node in a directed straight skeleton,
we study the different shapes that a wavefront may have at an event at
locus $p$ and time $t$.
At a time $t-\delta$ immediately prior to the event, the wavefront will
consist of a combination of reflex and convex vertices, tracing out
reflex and convex arcs, all of which will meet at $p$ at time $t$.  We
choose $\delta$ sufficiently small such that no event will happen in the
interval $[t-\delta, t)$.

Consider the wavefront around a locus $p$ at an event, and consider the
\emph{wedges} that have been already swept by the wavefront.  With wedge
we mean the area near $p$ swept over by a continuous portion of the
wavefront polygon until time $t$.

If a single wedge covers the entire area around $p$, we call it a
\emph{full} wedge.  The interior angles of other wedges might be less
than $\pi$, greater than $\pi$, or exactly $\pi$ as illustrated in
\autoref{fig:wedges}.  We call the first type of wedge \emph{reflex} and
the second type of wedge \emph{convex}, after their corresponding
wavefront vertices in the simple case.  The third type of wedge we
simply call $\pi$-wedge.

\begin{figure}[!ht]
  \captionsetup{belowskip=-0.8\baselineskip}
    \centering
    \includegraphics[page=1]{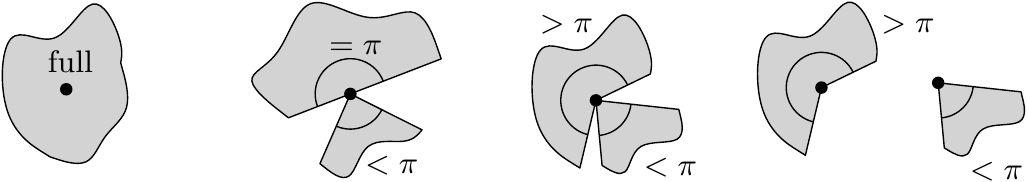}
    \caption{
      Wavefront wedges at event times are either full wedges or can be
      classified by their interior angle.  The wedges (already-swept
      areas) are gray, the remaining white sectors are covered by the
      wavefront polygon(s).
    }
    \label{fig:wedges}
\end{figure}

A single wedge may have been traced out by just one wavefront edge if a
wedge at an event has an interior angle of exactly $\pi$, but in all
other cases it is the area covered by two or more wavefront edges and
their incident wavefront vertices, which have traced out one or more
incoming arcs at $p$.  Note that all but the two outermost edges of this
part of the wavefront collapse at time $t$.

\smallskip

We will establish arc-\emph{patterns} to describe combinations of arcs
at a node.  Such a pattern is a string consisting of the types of arcs:
$r$ for an incoming reflex arc, $c$ for an incoming convex arc, and
$\hat{r}$ and $\hat{c}$ for their outgoing counterparts.  We will use
operators known from language theory, such as parentheses to group
blocks, the asterisk $(^*)$ to indicate the preceding character or group
may occur zero or more times, the plus sign $(^+)$ to indicate the
preceding block may occur one or more times, and the question mark
$(^?)$ to indicate it may exist zero times or once.  When defining
patterns we give them variable names in capitals.

We now investigate which combination of arcs may trace out which types
of wedges.  For this purpose we first characterize single wedges and
provide their describing arc-patterns.  Then we examine all possible
single wedge combinations and provide allowed arc-patterns for interior
nodes.

In this section, we give only an overview of arc-patterns
of single wedges and of arc-patterns for possible combinations of these
wedges at interior nodes.  For a detailed analysis please see
\aref{sec:patterns}.

As mentioned, single wedges can be reflex wedges, convex wedges,
$\pi$-wedges, and so called \emph{full wedges}, the latter being traced
out by a wavefront that collapses completely around a locus $p$.  The
arc-pattern for a reflex wedge is $R := r ~ (c r)^*$, i.e., one reflex
vertex, followed by zero or more (convex, reflex) pairs of vertices and
thus arcs.  If a reflex wedge is created by a single reflex wavefront
vertex (and its incident edges) then we call it \emph{trivial}.
Otherwise, we call it \emph{non-trivial}, with $R_+ = r~(c r)^+$
specifying the arc-pattern of a non-trivial reflex wedge.

The arc-pattern for a convex wedge is $C := r^? ~ c ~ (r^? c)^* ~ r^?$.
Like for reflex wedges we distinguish trivial (traced out by a single
convex vertex) and non-trivial convex wedges, the latter (any pattern
matching $C$ and having length at least two) being denoted by $C_+$.

The case of $\pi$-wedges is related to the case of reflex wedges.  A
$\pi$-wedge is either traced out by exactly one wavefront edge (no
incoming arcs), or it has the same pattern as for a non-trivial reflex
wedge.  Hence, we have as arc-pattern $P := \emptyset ~|~ R_+$.  Note
that since we have explicitly excluded parallel polygon edges for this
problem setting, only trivial $\pi$-wedges can exist.  Therefore, we
will set $P := \emptyset$ here.

Last, the arc-pattern of a full wedge is
$F := c ~ (r^? c)^* ~ c ~ (r^? c)^* ~ c ~ (r^? c)^*$.

\begin{figure}[!ht]
  \centering
  \includegraphics[page=1]{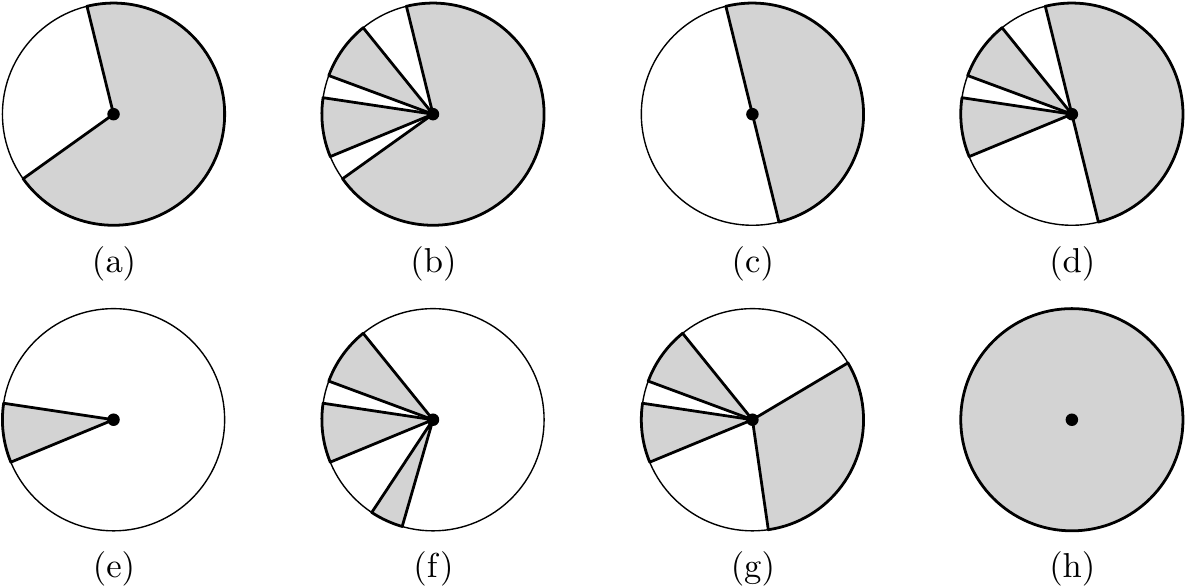}
  \caption{
    All wedge combinations possible at a node.
  }
  \label{fig:node-wedges}
\end{figure}

Analyzing possible combinations of single wedges at interior nodes we
get seven different allowed arc-patterns (see
\autoref{fig:node-wedges}~(a-h)):
$N_a := C_+ ~ \hat{c}$\ ,\quad
$N_b := C ~ \hat{c} ~ (R ~ \hat{c})^+$\ ,\quad
$N_d := P ~ \hat{c} ~ (R ~ \hat{c})^+$\ ,\quad
$N_e := R_+ ~ \hat{r}$\ ,\quad
$N_f := (R ~ \hat{c})^+ R ~ \hat{r}$\ ,\quad
$N_g := (R ~ \hat{c})^+ R ~ \hat{c}$\ ,\quad and
$N_h : = F$.
(Note that the pattern from \autoref{fig:node-wedges}~(c) will not
result in an event without parallel polygon edges, which we have
excluded.)
A simple split node is matched via $N_d$, a simple collapse node is
handled by $N_a$ (with $C_+ $ being either $cc$, $rc$, or $cr$), and a
simple peak node is matched by $N_h$.

Since these are all possible wavefront/wedge combinations, any interior
node of the straight skeleton will have to match
$N := N_a | N_b | N_d | N_e | N_f | N_g | N_h$.
We can state the following lemma:

\begin{lemma}
  \label{lem:full-necessary}
  Let $P$ be a simple polygon and let $T_\ell$ be the labeled directed
  tree such that $\SKd(P) \sim_\ell T_\ell$.
  Then the cyclic order of arcs of any interior node of $T_\ell$ needs
  to match the pattern specification $N$ defined above.
\end{lemma}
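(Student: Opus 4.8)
The plan is to argue that every interior node of the straight skeleton arises from an event that, immediately before the event time, corresponds to a particular local configuration of the wavefront around the event locus $p$, and that the cyclic sequence of incoming and outgoing arcs at the node is exactly the concatenation of the arc-patterns of the wedges that partition a neighborhood of $p$. Concretely, I would fix an interior node $n$ of $T_\ell$, let $p$ and $t$ be the locus and time of the corresponding event in $\SKd(P)$, and choose $\delta>0$ small enough that no event occurs in $[t-\delta,t)$, exactly as set up in the discussion preceding the lemma. The neighborhood of $p$ is then partitioned by the already-swept regions into wedges, each of which is reflex, convex, a $\pi$-wedge, or a single full wedge; I would rely on the earlier claim that every such wedge is of one of these four types.

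First I would establish the wedge-pattern dictionary rigorously: a wedge contributes a contiguous block of incoming arcs (the arcs traced out by the wavefront vertices bounding and lying inside that wedge, all of which collapse at $p$ at time $t$) together with the outgoing arc(s) emanating from $p$ along the wedge boundaries. The key structural fact, already asserted in the text, is that a reflex wedge matches $R=r\,(cr)^*$, a convex wedge matches $C=r^?\,c\,(r^?c)^*\,r^?$, a $\pi$-wedge matches $P$ (which we have set to $\emptyset$ since parallel edges are excluded), and a full wedge matches $F$. I would justify these by tracking the alternation of convex and reflex vertices along the relevant portion of the wavefront and observing that the two outermost edges of a non-$\pi$ wedge survive as the wedge boundaries while the interior vertices collapse; the parity/alternation constraints on reflex and convex vertices along a simple wavefront chain force precisely these regular expressions.

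Next I would enumerate the ways in which wedges can tile the full angular neighborhood of $p$ and read off the outgoing arcs $\hat{r},\hat{c}$ that lie between consecutive wedges. The crucial observations are: (i) between any two non-full wedges there is exactly one outgoing arc, traced out by the shared boundary of the two swept regions; (ii) an outgoing arc is reflex exactly when both adjacent wedges are reflex (so a reflex vertex of the wavefront persists past the event), and convex otherwise; and (iii) a full wedge occupies the entire neighborhood and hence admits no outgoing arc, giving $N_h:=F$. Carrying out this case analysis over the admissible multisets of wedges yields the seven patterns $N_a,N_b,N_d,N_e,N_f,N_g,N_h$; the configuration of \autoref{fig:node-wedges}(c) is discarded because, without parallel polygon edges, it cannot occur. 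Since these exhaust all possible local wavefront configurations at an event, the cyclic arc-order at $n$ must match $N:=N_a\,|\,N_b\,|\,N_d\,|\,N_e\,|\,N_f\,|\,N_g\,|\,N_h$, which is the claim.

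The main obstacle I anticipate is the bookkeeping in step two and three: one must show that the four wedge types are genuinely exhaustive (i.e.\ that no other angular region type can be swept at an event), and that the rule for whether an inter-wedge outgoing arc is reflex or convex is correct in every combination, including the degenerate interactions among multiple reflex vertices in a vertex-event. Getting the outgoing-arc reflexivity rule exactly right — in particular verifying that a reflex outgoing arc forces reflex wedges on both sides, which is what couples consecutive $R$-blocks to $\hat{r}$ in $N_e,N_f$ and forbids other adjacencies — is the delicate point, and is precisely where the detailed analysis deferred to \aref{sec:patterns} does the real work; the lemma statement itself then follows by concatenating the per-wedge patterns around the cyclic order.
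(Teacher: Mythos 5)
Your overall architecture is the same as the paper's: fix the node's event locus $p$ and time $t$, choose $\delta$ so that no event occurs in $[t-\delta,t)$, classify the already-swept wedges around $p$ (using that all participating wavefront edges are tangent to a common circle), attach the regular expressions $R$, $C$, the $\pi$-wedge pattern, and $F$ to the individual wedges, and then enumerate the admissible wedge combinations to read off $N_a,\dots,N_h$. However, your rule (ii) for labeling the outgoing arcs contains a genuine error: you claim an outgoing arc is reflex \emph{exactly when} both adjacent wedges are reflex. Only one direction is true (a reflex outgoing arc does force reflex wedges on both sides); the converse fails, because whether the new wavefront vertex in an unswept sector is reflex or convex is a \emph{metric} condition, not a local adjacency condition: the vertex is reflex if and only if that white sector has interior angle greater than $\pi$. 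Since the wedge and sector angles sum to $2\pi$, at most one sector can exceed $\pi$, and this happens precisely when all wedges are reflex and lie in a common half-plane. This is exactly how the paper separates the all-reflex-wedge cases into $N_f := (R~\hat{c})^+\,R~\hat{r}$ (wedges in one half-plane: a single $\hat{r}$ in the unique sector of angle greater than $\pi$, with $\hat{c}$ everywhere else) and $N_g := (R~\hat{c})^+\,R~\hat{c}$ (wedges not in one half-plane: all outgoing arcs convex). Under your biconditional, two reflex wedges would always yield $R~\hat{r}~R~\hat{r}$, a pattern that can never occur (two sectors of angle greater than $\pi$ cannot coexist), and your enumeration would miss $N_g$ and the $(R~\hat{c})^+$ blocks of $N_f$ entirely. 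Your rule gives the right answer in $N_a$, $N_b$, $N_d$, and $N_e$ only incidentally, because a convex or $\pi$-wedge (of angle at least $\pi$) forces every white sector below $\pi$, and a lone reflex wedge leaves a single sector above $\pi$.

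The remainder of your proposal tracks the paper's deferred analysis correctly: the per-wedge dictionary is justified there just as you sketch (tangency to the common circle forces the outermost vertices of a reflex wedge to be reflex, a convex separator between consecutive reflex vertices, and no two consecutive convex vertices), exhaustiveness of the wedge types and the constraint that at most one wedge is convex, $\pi$, or full follow from the angle budget around $p$, and case (c) is discarded exactly as in the paper because without parallel edges a lone $\pi$-wedge is trivial and produces no event. One further small slip: outgoing arcs live in the white sectors \emph{between} wedges rather than ``along the wedge boundaries'' of a single wedge, so the node pattern is an interleaving of wedge patterns and sector arcs, not a concatenation of per-wedge blocks each carrying its own outgoing arcs; your statement (i) (exactly one outgoing arc per white sector) is the correct bookkeeping and should replace that phrasing. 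To repair the proof, substitute for rule (ii) the angle criterion above and redo the multi-reflex-wedge case split by whether the wedges fit in a common half-plane.
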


After the necessity of the discussed conditions we now prove their
sufficiency.

\begin{lemma}
  \label{lem:full-sufficient}
  Let $T_\ell$ be any labeled directed tree for which
  (i) the cyclic order of arcs of each interior node matches the pattern
      specification $N$ defined above and
  (ii) each leaf has out-degree one.
  Then $T_\ell$ is realizable by some simple polygon~$P$, that is,
  $\SKd(P) \sim_\ell T_\ell$.
\end{lemma}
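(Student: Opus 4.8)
The plan is to mirror the inductive-construction strategy used for the degree-three case in \autoref{lem:constructing-P} and \autoref{lem:labeled-necessary}, but now working wedge-by-wedge rather than node-by-node. The key structural insight is that the arc-patterns $N_a,\dots,N_h$ were derived precisely as the realizable combinations of single wedges (reflex $R$, convex $C$, $\pi$-wedge $P$, and full $F$), each of which corresponds to a concrete local wavefront configuration. So rather than realizing each node in one shot, I would realize each \emph{wedge} incident to a node as an independent local piece of wavefront, then glue the wedges together around the node. First I would strip off all reflex arcs: since in every pattern a reflex incoming arc $r$ can only appear inside an $R$ or $C$ block (never as the unique arc at a node by itself, and never outgoing from an interior node), I would cut the tree at these reflex arcs to obtain subtrees that, after the cut, contain only convex arcs except at their new leaves. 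This reduces the problem to realizing the convex skeleton plus then reintroducing reflex arcs via small local edge-insertions, exactly as in the proof of \autoref{lem:labeled-necessary} where a convex vertex was split into a reflex/convex pair.

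Concretely, I would proceed by induction on the number of interior nodes, processing the tree from the ``top'' (peaks) downward in time. For a single node matching one of the patterns $N_a,\dots,N_h$, I would build a local polygon (or local wavefront configuration) directly: a full wedge $F$ gives a collapsing convex polygon realizing a generalized peak; patterns combining $R$- and $C$-wedges correspond to a wavefront in which several convex chains and reflex spikes meet simultaneously at a point, and the arc-pattern literally dictates the cyclic arrangement of convex edges and reflex spikes around the locus $p$. For each such node I would place the contributing wavefront edges so that they all reach $p$ at a common time, with the reflex wedges realized by reflex vertices and the convex wedges by convex chains. Then, as in \autoref{lem:simple-extend-angle}, I would carry a free angle parameter at each of the outgoing arcs so that, when I offset outward and attach the subpolygons realizing the incoming subtrees, the angles can be matched and the merge performed. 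The merging step is the same ``place outer offset, keep everything else small enough by the disk argument'' mechanism already used in \autoref{lem:constructing-P}; I would re-invoke the observation that around each vertex there is a disk meeting the polygon only in that vertex's wedge, so that distant parts never interfere and no spurious events are introduced.

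The reflex arcs require the extra care that distinguishes this lemma from the degree-three version. Since vertex events can now create reflex wavefront vertices, a reflex arc no longer has to originate at an input leaf; it may be produced in the interior as the output of one event and then consumed as an incoming reflex arc at a later event (this is exactly why patterns such as $N_f$ with an outgoing $\hat r$ are allowed). To realize these I would arrange the timing so that the event creating a reflex vertex (a vertex event, realized as a cluster of reflex spikes meeting at a point) happens strictly before the event that consumes it; the smallness-of-edge device from \automl{lem:labeled-necessary} and \autoref{lem:simple-extend} guarantees we can always separate event times by choosing inserted edges short enough. Working outward from the peaks, each node's local construction fixes a short segment of wavefront, and the next node downward is attached to the free outgoing rays, so the global wavefront is assembled consistently with the prescribed cyclic orders.

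The main obstacle I expect is the wedge-gluing at a single interior node when several non-trivial wedges of mixed type meet: one must show that convex chains, reflex spikes, and full-wedge pieces prescribed by a pattern like $N_b$ or $N_g$ can be simultaneously positioned so that (a) all of their bounding edges reach the common locus $p$ at exactly the same time $t$, (b) the cyclic order of the resulting arcs at $p$ is precisely the one read off from the pattern, and (c) the interior angles of the intervening white sectors (the not-yet-swept regions) are consistent so the wavefront remains a simple polygon up to time $t$. In other words, the necessity analysis told us these patterns are \emph{geometrically} the images of wedge combinations, but for sufficiency I must exhibit actual coordinates (or at least a convincing continuity/degrees-of-freedom argument) realizing each pattern with matching times and orders, and then show the free angle parameters left over suffice to perform all attachments without conflict. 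I would handle this by giving an explicit local model for each single wedge type, parameterized by its opening angle and time, and then argue that the angles around a node sum correctly so the wedges tile the neighborhood of $p$, invoking \autoref{lem:simple-extend-angle} to absorb the one remaining angular degree of freedom per attachment.
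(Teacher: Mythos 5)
Your construction engine is essentially the paper's: build each piece backwards in time starting from the peaks (or from the node carrying the unique outgoing arc), realize each node locally by placing the participating wavefront edges so that their supporting lines are tangent to a common circle and all reach the event locus simultaneously, keep a free angle per attachment as in \autoref{lem:simple-extend-angle}, and merge sub-polygons via small exterior offsets using the disk argument from \autoref{lem:constructing-P}. The paper's proof does exactly this, but with a different decomposition than the one you announce first: it cuts $T_\ell$ only at nodes of out-degree at least two (the generalized split nodes), obtaining maximal components all of whose nodes have out-degree zero or one. Reflex arcs, including interior ones, remain \emph{inside} these components and are realized during the reverse-time expansion by treating each reflex polygon vertex as a reflex wedge and, where the tree demands it, replacing it by an $r\,(c\,r)^+$ chain; the component polygons are then assembled at each split node's white sectors, with one angle forced per attachment, before offsetting outward.

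Your opening reduction --- cut the tree at every reflex arc, realize a ``convex skeleton,'' and reinsert reflex arcs by small edge-insertions as in the degree-three labeled case --- is unsound, for a reason you yourself half-notice later. The premise that a reflex arc is ``never outgoing from an interior node'' is false: $N_e = R_+\,\hat{r}$ and $N_f = (R\,\hat{c})^+\,R\,\hat{r}$ are precisely the vertex-event patterns with an outgoing reflex arc. The $z$-shape insertion of \autoref{fig:z-shape} creates a reflex vertex on the \emph{input} boundary, so it can only realize reflex arcs emanating from leaves at time zero; it cannot reintroduce an interior reflex arc whose source is a vertex event. Worse, cutting all reflex arcs at a node matching $R_+\,\hat{r}$ strips away its reflex incoming arcs and leaves a node matching no pattern at all, so the subtrees produced by your cut need not satisfy the hypotheses you intend to induct on. Your third-paragraph patch (order the creating event before the consuming event by shrinking edges) addresses only timing; the binding constraint for an interior reflex arc is spatial: the reflex wavefront vertex consumed at the later node must be the very vertex emitted at the earlier one, arriving at the correct locus. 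In the paper's reverse-time wedge expansion this holds by construction --- one expands precisely that polygon vertex --- whereas in a cut-and-reattach scheme it is an unproven coincidence condition. If you drop the paragraph-one reduction and promote your paragraph-two/three construction, cutting only at nodes of out-degree at least two, you arrive at the paper's proof; the remaining obligation you correctly flag --- exhibiting each single-wedge pattern with all edges tangent to a common circle --- is exactly what the paper's appendix on arc-patterns supplies.
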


\begin{proof}
  We will construct $P$ in a manner similar to the one described in
  \autoref{lem:constructing-P}.  We start by identifying maximally
  connected components $C_1, C_2, \ldots, C_n$ of $T_\ell$ containing
  nodes with out-degree either zero or one.  These components take the
  place of the subtrees of our forest from \autoref{lem:constructing-P},
  and they are connected in $T_\ell$ via split-nodes, i.e., nodes with
  out-degree $\ge 2$.

  We pick an arbitrary component $C_1$ and create a polygon
  $P_1$ such that $\SKd(P_1) \sim_\ell C_1$ as follows.
  If there are no outgoing arcs from $C_1$, we start at its unique peak
  node.  Otherwise, there is exactly one outgoing arc, and we begin at
  the node it is incident to.  Constructing a polygon for this
  node is straightforward by applying the concepts learned from
  considering convex, reflex, and full wavefront wedges.
  We proceed by extending this initial polygon step by step like in
  \autoref{lem:simple-extend}, treating each reflex or convex vertex of
  the polygon as a wavefront wedge to be constructed, until we have a
  polygon for the entire component.

  Next, we pick one of the split nodes connected to $C_1$.  That split
  node, we call it $n$, will have one of the forms from
  \autoref{fig:node-wedges} that have at least two white sectors.  The
  polygon we just created will cover one of these white sectors.
  (Depending on the type of arc connecting $C_1$ to $n$, the polygon
  will either have a reflex or convex vertex for this arc.)
  We continue by constructing polygons for all remaining white sectors
  in the same fashion we used for constructing $P_1$.  Note that this
  process allows us to force at least one angle, and therefore we can
  construct polygons that fit into the white sectors for $n$.

  Now we have the wavefront polygon as it should be when the event at
  $n$ happens.  We compute a small exterior offset, joining all polygons
  into one larger polygon.  The new reflex or convex vertices at this
  point are then further subdivided as required by the incoming arcs for
  $n$.

  We repeat this process until we have covered all split nodes and thus
  all components and have thereby created a polygon whose structure
  matches $T_\ell$.
  \qed
  Please see
  \aref{sec:sample-full-construction}
  for an example.
\end{proof}

\smallskip

Combining \autoref{lem:full-necessary} and
\autoref{lem:full-sufficient}, we obtain the following theorem:

\begin{theorem}
  An ordered labeled directed tree $T_\ell$ is the directed straight
  skeleton of a simple polygon $P$ without parallel edges if and only if
  (i) the cyclic order of arcs of each interior node of $T_\ell$ matches
  the pattern specification $N$ defined above and
  (ii) each leaf has out-degree one.
\end{theorem}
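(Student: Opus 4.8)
The plan is to prove the theorem as a direct combination of the two lemmas immediately preceding it, which together establish the two directions of the ``if and only if.'' I would structure the argument as follows.

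\begin{proof}
  The theorem follows by combining \autoref{lem:full-necessary} and
  \autoref{lem:full-sufficient}.

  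For the \emph{only if} direction, suppose $T_\ell$ is the labeled
  directed straight skeleton of a simple polygon $P$ without parallel
  edges, that is, $\SKd(P) \sim_\ell T_\ell$. By
  \autoref{lem:full-necessary}, the cyclic order of arcs of every
  interior node of $T_\ell$ matches the pattern specification $N$, which
  is condition~(i). For condition~(ii), recall that in the roof model
  every vertex of $P$ has zero $z$-coordinate while every interior node
  is created by an event at some time $t>0$; hence each arc incident to
  a leaf increases in elevation away from that leaf, so each leaf has
  in-degree $0$ and out-degree $1$. This is exactly the same reasoning
  as for condition~(G1) in \autoref{lem:necessary}, and it does not rely
  on the general-position assumption. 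Thus both conditions hold.

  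For the \emph{if} direction, suppose $T_\ell$ satisfies conditions~(i)
  and~(ii). Then $T_\ell$ meets precisely the hypotheses of
  \autoref{lem:full-sufficient}, which constructs a simple polygon $P$
  such that $\SKd(P) \sim_\ell T_\ell$. The construction produces a
  polygon without parallel edges, since it proceeds by building and
  merging wavefront wedges and taking small exterior offsets, at each
  step choosing vertex angles freely; in particular parallel edges can
  be avoided throughout, matching the standing restriction that the
  polygons considered have no parallel edges.

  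Combining the two directions yields the claimed equivalence.
  \qed
\end{proof}

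The main obstacle is not any new mathematical content but rather the bookkeeping of verifying that the two lemmas' hypotheses and conclusions align exactly with the theorem's conditions~(i) and~(ii), together with the no-parallel-edges restriction. Condition~(i) is verbatim the conclusion of \autoref{lem:full-necessary} and the hypothesis of \autoref{lem:full-sufficient}, so that direction is immediate. The only point requiring genuine (if brief) argument is condition~(ii): the necessity lemma as stated speaks only about the pattern $N$, so I must separately observe that leaves have out-degree one, reusing the elevation-monotonicity argument from the leaf analysis preceding \autoref{lem:necessary}. I would make sure that observation is invoked explicitly rather than assumed, since it is the one piece of the equivalence not literally packaged inside the two cited lemmas.
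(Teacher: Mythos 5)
Your proposal matches the paper's own argument, which simply states that the theorem follows by combining \autoref{lem:full-necessary} and \autoref{lem:full-sufficient}; your extra care in supplying the out-degree-one argument for leaves (condition~(ii)) via the roof-model elevation reasoning is a correct and welcome filling-in of a detail the paper leaves implicit, since \autoref{lem:full-necessary} as stated only addresses condition~(i). No gaps.
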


\paragraph{Conclusion}
In this work we developed a complete characterization of necessary and
also sufficient conditions such that a given directed and labeled
ordered tree can be represented as the straight skeleton of a simple
polygon. This extends previous work on representing trees via related
geometric structures~\cite{ACD+12,LM03}.

We leave the algorithmic question -- how efficient suitability of a
given input tree can be tested and, in case of an affirmative answer, a
corresponding simple polygon can be computed -- for future research. We
conjecture that both is possible in time linear in the size of the given
tree.

\bibliography{bibs/names-short,bibs/weasel,bibs/extra}
\bibliographystyle{splncs03}

\newpage


\appendix
\section{Proofs}
\label{sec:proofs}

\subsection{Proof of \autoref{lem:one-peak}
  (See page~\pageref{lem:one-peak})}
\LEMonepeak*
\begin{proof}
  Say $T$ has $k$ nodes and $k-1$ directed arcs.  Since (G1--G3) hold,
  each node has out-degree exactly zero or one.  There are $k-1$ arcs,
  so there are $k-1$ nodes with out-degree one.  This leaves exactly one
  node with out-degree zero, the single peak node.
  \qed
\end{proof}

\subsection{Proof of \autoref{lem:splitTree}
  (See page~\pageref{lem:splitTree})}
\LEMsplitTree*
\begin{proof}
  Removing one $s$ gives three connected components.  One of these is a
  leaf and removed.  In the other two, we added a leaf with one outgoing
  arc, and for all other nodes in-degrees and out-degrees are unchanged.
  So (G1--G3) hold for all created components.  Since out-degrees are
  unchanged, every component of the final forest $F$ has no split nodes,
  and hence a unique peak by the \autoref{lem:one-peak}.
  \qed
\end{proof}

\subsection{Proof of \autoref{lem:simple-extend}
  (See page~\pageref{lem:simple-extend})}
\LEMsimpleExtend*

This is a more verbose version of the proof:

\begin{proof}
  We show this constructively by induction on the number of interior
  nodes of $T$.  Recall that $T$ has a unique peak node.

  In the base case, $T$ consists just of a peak node and three incident
  leaves.  Choose an arbitrary triangle; the straight skeleton then has
  such a structure.

  Now let us assume that for every tree $T$ with $k$ interior nodes we
  can find a convex polygon $P$ such that $\SKd(P) \sim T$.

  Given $T'$ with $k+1$ interior nodes we construct $P'$ with
  $\SKd(P') \sim T'$ as follows.
  Find an interior node $n$ of $T'$ that is incident to two leaves and
  obtain $T''$ from $T'$ by dropping these two leaves, turning $n$ in
  $T''$ into a leaf.  Since $T''$ has only $k$ interior nodes, we can
  find a convex polygon $P''$ such that $\SKd(P'') \sim T''$.

  To construct $P'$, we compute a small exterior offset of $P''$ and
  replace the vertex corresponding to $n$ with a small edge.  It should
  be sufficiently small that it will collapse before the wavefront
  propagating from $P'$ reaches $P''$.  This ensures that the event
  happens before any other in the wavefront propagation.

  By this construction, the polygon $P'$ will satisfy $\SKd(P') \sim T'$.
  See \autoref{fig:creating-P} for an illustration.
  \qed
\end{proof}

\subsection{Proof of \autoref{lem:simple-extend-angle}
  (See page~\pageref{lem:simple-extend-angle})}
\LEMsimpleExtendAngle*

\begin{proof}
  Let $p$ be the unique peak node in $T$, and let
  $p=v_0,v_1,\dots,v_{m-1},v_m=\ell$ be the nodes on the unique path
  from $p$ to $\ell$ in $T$.  Pick a series of $m$ values $\alpha_i$
  such that
  $0 < \alpha_1 < \alpha_2 < \ldots < \alpha_{m-1} < \alpha_m = \alpha$.

  When we create the triangle whose straight skeleton corresponds to the
  tree immediately around $p$, we choose it so that the angle at the
  polygon vertex corresponding to $v_1$ is $\alpha_1$.  Throughout the
  expansion steps, we maintain that the angle at the vertex
  corresponding to $v_i$ (if it is a leaf of the current polygon) is
  $\alpha_i$.

  When expanding the polygon at some leaf, we can choose the angle at
  one of the incident new vertices, as long as it is larger than the
  angle of the vertex replaced and less than $\pi$.  See
  \autoref{fig:creating-P-with-angle} for an illustration.

  Thus, if we expand at the vertex corresponding to $v_i$ (for some
  $i\geq 1$), it has angle $\alpha_i$, and we can pick the angle at the
  vertex corresponding to $v_{i+1}$ to be $\alpha_{i+1}$.  At the end of
  the expansion the angle at the vertex that corresponds to $v_m=\ell$
  is now $\alpha_m = \alpha$.
  \qed
\end{proof}

\begin{figure}[!ht]
  \centering
  \includegraphics[page=1]{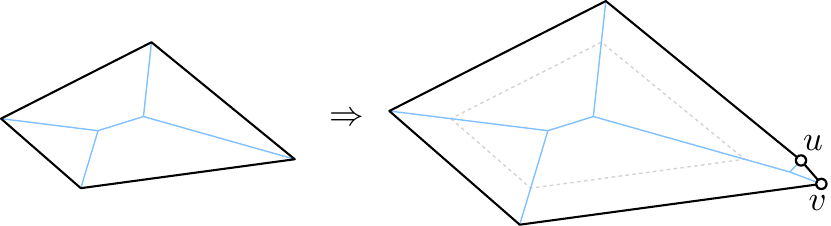}
  \caption{
    When extending a polygon, we can choose the angle at one of the two
    new vertices.  The new angle can be an arbitrary value between the
    angle of the replaced vertex and $\pi$.
  }
  \label{fig:creating-P-with-angle}
\end{figure}


\newpage
\section{Details on allowed arc-patterns at interior nodes}
\label{sec:patterns}

We investigate which combination of arcs may trace out which types of
wedges.  For this purpose it is useful to look at the wavefront at a
time $t-\delta$ immediately before the event, when all edges are still
of proper length and not collapsed. We first consider single wedges and
will afterwards discuss all possible combinations of them.
Since all edges move at unit speed and reach $p$ exactly at time $t$, we
know that just before the event their supporting lines must all be
tangent to a common circle.

\paragraph{Reflex Wedges}

First, let us look at reflex wedges (\autoref{fig:wedge-reflex}).  A
reflex wedge can, of course, be created by a single reflex wavefront
vertex and its incident edges.  However, if it is created by more edges,
then:
\begin{itemize}
  \item
    The vertices incident to the two outermost edges ($e_\ell$ and $e_r$
    in \autoref{fig:wedge-reflex}) need to be reflex, because adding a
    convex vertex would introduce a wavefront edge that has already
    swept $p$ by time $t$.
  \item
    Since no two consecutive reflex vertices of the wavefront can become
    incident at an event (they always move away from one another), there
    needs to be a convex vertex between any two reflex vertices.  This
    convex vertex will also reach $p$ at time $t$.
  \item
    Between two reflex vertices of a reflex wedge, no two consecutive
    convex vertices can exist.  If they did, the wavefront edge between
    them at time $t-\delta$ would not be tangent to the circle and thus,
    would collapse prior to $t$, causing an event.  However, we chose
    $\delta$ such that there is no event between $t-\delta$ and~$t$.
  \item
    Any reflex vertex of the reflex wedge at time $t-\delta$ can in turn
    be replaced by a pair of reflex vertices with their convex vertex
    connector; cf. \autoref{fig:wedge-reflex}~(3).  This implies that an
    arbitrary number of reflex vertices can appear in one reflex wedge.
\end{itemize}

\begin{figure}[!ht]
  \centering
  \includegraphics[page=1]{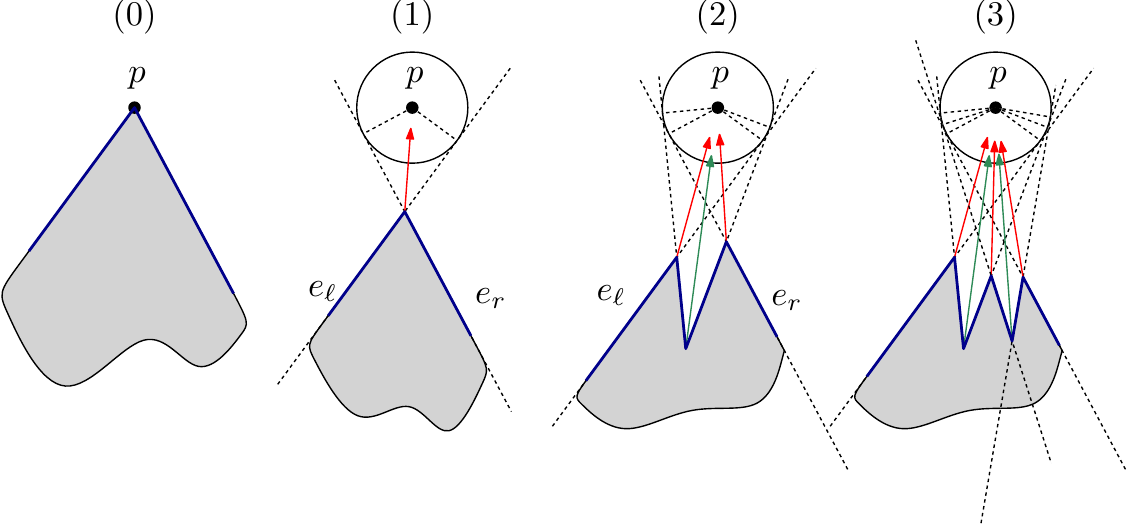}
  \caption{
    Wavefronts (bold, blue) of a reflex wedge at (0) and immediately
    before (1--3) an event.
    Area already swept over by the wavefront is shown in gray.  All
    edges participating in the event have supporting lines that are
    tangent to the circle around the event location.
  }
  \label{fig:wedge-reflex}
\end{figure}

Summarizing, the pattern describing the arcs of a reflex wedge is $R :=
r ~ (c r)^*$, i.e., one reflex vertex, followed by zero or more (convex,
reflex) pairs of vertices and thus arcs.

We call a reflex wedge \emph{trivial} if it was traced out by a single
vertex and its two incident wavefront edges, and \emph{non-trivial}
otherwise.  Let $R_+ = r ~ (c r)^+$ be the pattern specification for a
non-trivial reflex wedge.

\paragraph{Convex Wedges}

A convex wedge in the simplest case is the result of a single convex
vertex and its incident wavefront edges.  More complex situations are
possible:  Many consecutive convex vertices may reach $p$ at the same
time.  Further, between any two convex vertices there may be one reflex
vertex (see e.g.  \autoref{fig:z-shape} on how to replace a convex
vertex with two vertices, one of them reflex). Even the outermost
vertices in any collapsing chain need not be convex.

\begin{figure}[!ht]
  \captionsetup{belowskip=-0.8\baselineskip}
  \centering
  \includegraphics[page=1]{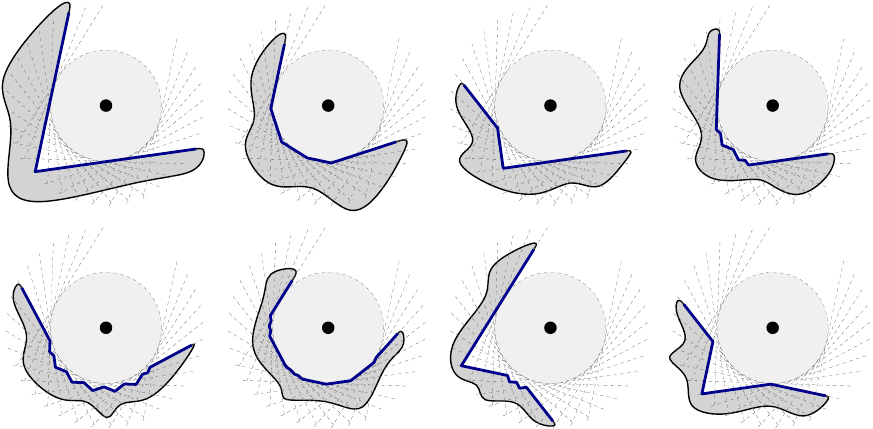}
  \caption{
    Wavefronts (bold, blue) of a convex wedge immediately before an
    event.
    Area already swept over by the wavefront is shown in gray.  All
    edges participating in the event have supporting lines that are
    tangent to the circle around the event location.
  }
  \label{fig:wedge-convex}
\end{figure}

Therefore, we can establish the pattern $C := r^? ~ c ~ (r^? c)^* ~ r^?$
for a convex wedge.

Similar to before, we call a convex wedge \emph{trivial} if it was
traced out by a single vertex and its two incident wavefront edges and
\emph{non-trivial} otherwise.  We use $C_+$ to denote any pattern
matching $C$ and having length at least two, i.e., a non-trivial convex
wedge.

\paragraph{$\pi$-wedges}

A wedge with interior angle $\pi$ is related to the case of the reflex
wedge.  Either it is traced out by exactly one wavefront edge and there
are no incoming arcs from this wedge to the event node, or it has the
same pattern as for a reflex wedge.  This is easy to see when again
considering a time $t-\delta$ and the circle such that all participating
wavefront edges lie in supporting lines tangential to the circle.
Either it is just one edge, and it is tangential, or there are several
and they need to start and end with reflex vertices to be tangent to the
circle.  Since in that case no wavefront edge actually touches the
circle, we again cannot have more than one consecutive convex vertex.

\begin{figure}[!ht]
  \centering
  \includegraphics[page=1]{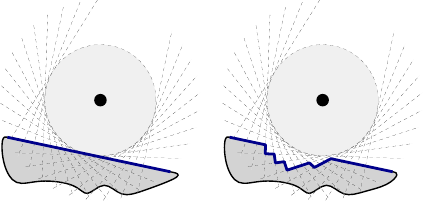}
  \caption{
    Wavefronts (bold, blue) of a $\pi$-wedge immediately before an
    event.
    Area already swept over by the wavefront is shown in gray.  All
    edges participating in the event have supporting lines that are
    tangent to the circle around the event location.
  }
  \label{fig:wedge-pi}
\end{figure}

We call a $\pi$-wedge \emph{trivial} if it was traced out by a single
wavefront edge and, hence, has no incoming arc, and \emph{non-trivial}
otherwise.

Thus, the pattern for a $\pi$-wedge is $P := \emptyset ~|~ R_+$, that
is, either it is empty or it is a non-trivial reflex-wedge pattern,
i.e., with at least two reflex vertices.

Note that the outgoing arc of a non-trivial $\pi$-wedge is traced out by
a vertex of the wavefront with interior angle exactly $\pi$.  In such
cases our problem statement that requires all arcs to be labeled either
reflex or convex becomes invalid.  We have therefore excluded polygons
with parallel edges, and so for the purposes of this paper, only trivial
$\pi$-wedges can exist.  Therefore, we will set $P := \emptyset$ here.

\paragraph{Full wedge}

A full wedge is traced out by a wavefront that collapses completely
around a locus $p$.  In the simplest case this is a triangle.  As in the
case of the convex wedge we can replace any convex wavefront vertex with
more convex wavefront vertices, and we can also put reflex vertices
between any two convex vertices.  The only constraint is that there need
to be at least three convex vertices in total in order to close the
wavefront polygon.

\begin{figure}[!ht]
  \centering
  \includegraphics[page=1]{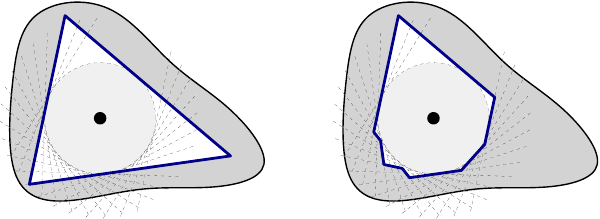}
  \caption{
    Wavefronts (bold, blue) of a full wedge immediately before an event.
    Area already swept over by the wavefront is shown in gray.  All
    edges participating in the event have supporting lines that are
    tangent to the circle around the event location.
  }
  \label{fig:wedge-full}
\end{figure}

Therefore, the pattern for a full wedge is
$F := c ~ (r^? c)^* ~ c ~ (r^? c)^* ~ c ~ (r^? c)^*$.

\bigskip

\paragraph{Wedge combinations at events}

Let us now consider which wedge combinations may be present at an event,
and which outgoing wavefront vertices can be created in each case, cf.
\autoref{fig:node-wedges}.

So we have now defined patterns $R,R_+,C,C_+,P$, and $F$ for wedges that
are reflex, convex, $\pi$-wedges, or full.  Around each node, there may
be many wedges, but at most one of them can be convex, $\pi$-wedge, or
full.

\begin{description}
  \item[one convex wedge:]~\\
    If there is exactly one convex wedge, then it needs to be a
    non-trivial wedge.  (Otherwise there would be no event.)  The event
    will create a single convex outgoing wavefront vertex.  The full
    pattern specification for this event therefore is
    $N_a := C_+ ~ \hat{c}$.
    See \autoref{fig:node-wedges}~(a).
  \item[one convex wedge and one or more reflex wedges:]~\\
    The $1+n$ wavefront wedges at the event leave a total of $1+n$
    uncovered (white) sectors that the wavefront will soon sweep.  For
    each of those, a single outgoing arc is created, and it must be
    convex since each unswept sector has an angle of less than $\pi$.
    $N_b := C ~ \hat{c} ~ (R ~ \hat{c})^+$.
    See \autoref{fig:node-wedges}~(b).
  \item[one $\pi$-wedge:]~\\
    For this to be an event, the $\pi$-wedge needs to be non-trivial.
    The single vertex outgoing from this event will be connecting two
    wavefront edges sharing a common supporting line and will thus be
    neither reflex nor convex.  We have excluded such cases in our
    problem statement.
    See \autoref{fig:node-wedges}~(c).
  \item[one $\pi$-wedge and one or more reflex wedges:]~\\
    This is an extended variant of the classic split event.  The node's
    pattern is
    $N_d := P ~ \hat{c} ~ (R ~ \hat{c})^+$.
    See \autoref{fig:node-wedges}~(d).
  \item[one reflex wedge:]~\\
    The wedge has to be non-trivial, else there would be no event.  So
    this is a vertex event, creating a new reflex vertex.
    $N_e := R_+ ~ \hat{r}$.
    See \autoref{fig:node-wedges}~(e).
  \item[two or more reflex wedges all in the same half-plane:]~\\
    Similar to the previous case, this must be a vertex event creating a
    new reflex vertex in the empty section whose interior angle is
    maximal (and thus larger than $\pi$).  In all other sections a
    convex vertex is created.
    $N_f := (R ~ \hat{c})^+ ~ R ~ \hat{r}$.
    See \autoref{fig:node-wedges}~(f).
  \item[two or more reflex wedges not all in the same half-plane:]~\\
    Since there is no empty half-plane around the event, a new convex
    vertex will be created for each empty sector.
    $N_g := (R ~ \hat{c})^+ R ~ \hat{c}$.
    See \autoref{fig:node-wedges}~(g).
  \item[full wedge:]~\\
    If the complete disk around an event has already been swept by the
    wavefront, then a wavefront polygon is collapsing about the event
    location and we have a full wedge.
    $N_h : = F$.
    See \autoref{fig:node-wedges}~(h).
\end{description}

Since these are all possible wavefront/wedge combinations, any interior
node of the straight skeleton will have to match
$N := N_a | N_b | N_d | N_e | N_f | N_g | N_h$.
Here, ``matching'' means matching cyclically, i.e., if a tree has a node
where we find in cyclic order, a reflex outgoing, a reflex incoming, a
convex incoming, and a reflex incoming, i.e., $\hat{r} ~ r ~ c ~ r$,
then this would match $N_e = R_+ ~ \hat{r}$ since we end up with two
identical patterns by rotating one string cyclically.


\section{Sample construction process (\autoref{lem:full-sufficient})}
\label{sec:sample-full-construction}

Suppose we have a split node $n$ matching the pattern
$r ~ c ~ r ~ \hat{c} ~ \hat{c}$, where each of those arcs has some more
tree components on the side not at $n$; see \autoref{fig:example1}~(a).

We see that it is a split event, matching the \emph{one $\pi$-wedge and
one or more reflex wedges} case with
$N_d := P ~ \hat{c} ~ (R ~ \hat{c})^+$.
Here, $P$ is the trivial $\pi$-wedge of $\emptyset$ and the only reflex
wedge is $r ~ c ~ r$; see \autoref{fig:example1}~(b).

When we get to this split node, we already have created the polygon for
one of the components $C_1$ or $C_2$.  Assume we did $C_1$ already and
produced $P_1$ (\autoref{fig:example1}~(c)).  We next create a polygon
$P_2$ for $C_2$, making sure that the angle at $n$ is sufficiently small
to fit at $n$.  In this example it needs to be less than $\pi - \alpha$
if $\alpha$ is the angle at $n$ for $P_1$.

\begin{figure}[!ht]
  \centering
  \begin{subfigure}[b]{0.32\columnwidth}
      \centering
      \includegraphics[page=1]{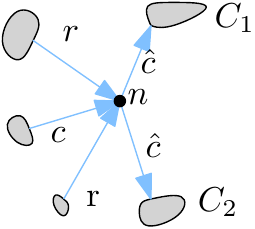}
      \caption{}
  \end{subfigure}
  \begin{subfigure}[b]{0.32\columnwidth}
      \centering
      \includegraphics[page=2]{example}
      \caption{}
  \end{subfigure}
  \begin{subfigure}[b]{0.32\columnwidth}
      \centering
      \includegraphics[page=3]{example}
      \caption{}
  \end{subfigure}
  \caption{
    A split node of a tree (a) and its wedge representation (b).
    A possible polygon $P_1$ for $C_1$ (c).
  }
  \label{fig:example1}
\end{figure}

Now that we have all the polygons present at the event node $n$, we
place them on the plane such the vertices of $P_1$ and $P_2$ that
correspond to the arc from $n$ become coincident in the same locus $p$.
Since this is a split node with a trivial $\pi$-wedge we also need to
rotate $P_2$ such that it forms an angle of exactly $\pi$ with $P_1$ on
one side.  Note that it may be necessary to scale $P_2$ such that it
does not intersect anything already constructed.

Next, we compute a small offset of the merged polygons.  The vertex $v$
illustrated in \autoref{fig:example2}~(a) currently stands for the
entire reflex wedge.  However, that wedge originates from three arcs, $r
~ c ~ r$, so we split that piece of the wavefront accordingly, see
\autoref{fig:example2}~(b).

\begin{figure}[!ht]
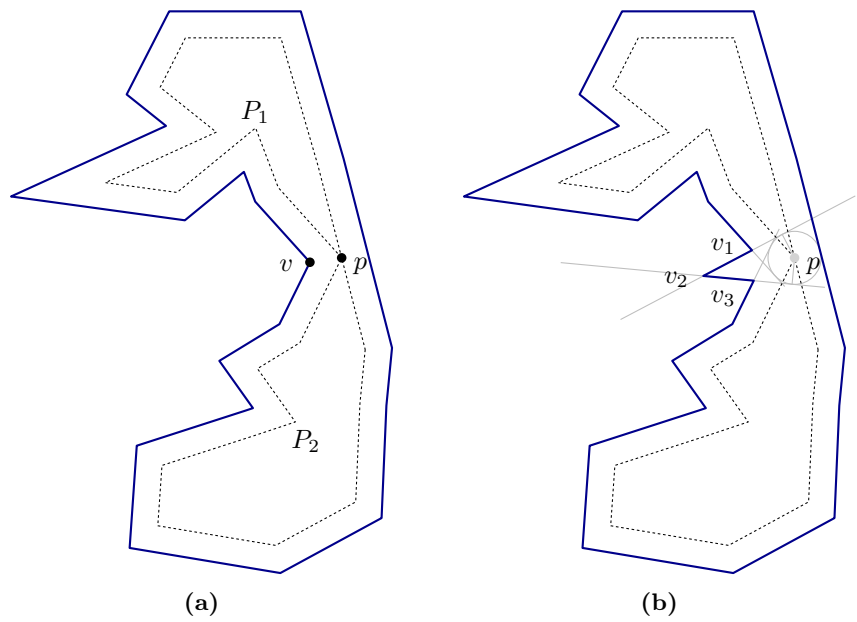

  \centering
  \begin{subfigure}[b]{0.49\columnwidth}
    \centering
    \includegraphics[page=4]{example}
    \caption{}
  \end{subfigure}
  \begin{subfigure}[b]{0.49\columnwidth}
    \centering
    \includegraphics[page=5]{example}
    \caption{}
  \end{subfigure}
  \caption{
    First we combine the two polygons for the individual components and
    compute an outer offset.  Then we replace the vertex $v$ for the
    reflex wedge by a set of vertices tracing out all the arcs of the
    wedge.
  }
  \label{fig:example2}
\end{figure}

\end{document}